\documentclass[11pt]{article}
\usepackage[a4paper,margin=1in]{geometry}
\usepackage{amsmath,amssymb,amsthm,mathtools}
\usepackage{booktabs,array,enumitem,microtype}
\usepackage[hidelinks]{hyperref}
\hypersetup{
  pdftitle={Fast Odd-Permutation Sums in Characteristic Two and Shortest Even Directed Cycles},
  pdfauthor={Hanqing Li},
  pdfsubject={Algebraic algorithms and directed cycles},
  pdfkeywords={odd-permutation sum, characteristic two, matrix multiplication, gradient, shortest even directed cycle}
}
\newtheorem{theorem}{Theorem}[section]
\newtheorem{lemma}[theorem]{Lemma}
\newtheorem{proposition}[theorem]{Proposition}
\newtheorem{corollary}[theorem]{Corollary}
\theoremstyle{definition}

\newcommand{\F}{\mathbb F}
\newcommand{\ph}{\Phi}
\newcommand{\grad}{\mathcal G}
\newcommand{\softO}{\widetilde O}
\newcommand{\per}{\operatorname{per}}
\newcommand{\rank}{\operatorname{rank}}
\newcommand{\tr}{\operatorname{tr}}
\newcommand{\sgn}{\operatorname{sgn}}
\newcommand{\diag}{\operatorname{diag}}
\newcommand{\one}{\mathbf 1}

\setlist[enumerate]{leftmargin=2em,itemsep=3pt,topsep=5pt}
\title{Fast Odd-Permutation Sums in Characteristic Two\\
and Shortest Even Directed Cycles}
\author{Hanqing Li\\Peking University}
\date{}

\begin{document}
\maketitle

\begin{abstract}
For a matrix $A$ over a field of characteristic two, let $\ph(A)$ be the
sum of its permutation monomials indexed by odd permutations. Although
determinant and permanent coincide in this characteristic, this parity
sub-sum retains information that neither gives separately. We show that
$\ph(A)$ and all its first partial derivatives can be computed
deterministically in $O(n^\tau)$ field operations for every $n\times n$
matrix, where $2<\tau\le3$ is any fixed admissible matrix multiplication
exponent. The result includes singular matrices and the binary field.
An inversion-count identity expresses $\ph$ through complementary minors.
For invertible matrices, a decomposition along two binary interval trees
aggregates these minors by matrix multiplication; a Boolean border of
constant size handles the remaining ranks. We also give an explicit
matrix formula for the full gradient.

Applied to $\ph(I+zW)$ for a randomly weighted adjacency matrix $W$, the
evaluator computes the shortest even directed-cycle length in
$\softO(n^{\tau+1})$ bit operations, improving the
$\softO(n^{\tau+3})$ bound of Bj\"orklund, Husfeldt, and Kaski
with the same multiplication exponent.
The same time bound recovers the union of the arcs of all shortest even
cycles with high probability, and deterministically outputs the cycle
under a unique-shortest-cycle promise. For general graphs, exact
maintenance of a nonzero coefficient gives an $\softO(n^4)$ algorithm
that outputs a shortest even cycle with high probability. Complementary
extraction methods improve this bound for short cycles and for cycles
that omit few vertices.
\end{abstract}

\noindent\textbf{Keywords:} odd-permutation sum; characteristic two;
matrix multiplication; algebraic fingerprinting; shortest even directed cycle.

\section{Introduction}\label{sec:intro}

Over a field of characteristic two, the signs in the determinant disappear,
so the determinant equals the permanent. This equality does not make the
two permutation parity classes individually accessible. We study the
\emph{odd-permutation sum}
\begin{equation}\label{eq:definition}
  \ph(A)=\sum_{\substack{\pi\in S_n\\\pi\text{ odd}}}
                   \prod_{i=1}^n A_{i,\pi(i)}.
\end{equation}
The word odd refers to the usual combinatorial parity of a permutation,
independently of the coefficient field. Our main result evaluates this
polynomial, together with all its first derivatives, within the same
asymptotic arithmetic bound as matrix multiplication.

Throughout, $\tau$ denotes a fixed real number with $2<\tau\le3$ for
which square matrix multiplication over the fields in question has a
uniform $O(n^\tau)$ arithmetic algorithm. For the algorithms over
truncated polynomial rings we use an admissible bilinear multiplication
algorithm, so that the same algorithm extends to algebras over the field.
Classical multiplication and Strassen's algorithm~\cite{Strassen1969}
are explicit choices. Constants may depend on $\tau$. If $\omega$ is
the infimum of admissible exponents, our bounds hold with
$\tau=\omega+\varepsilon$ for each fixed $\varepsilon>0$ with
$\omega+\varepsilon\le3$; the notation $\softO$ hides logarithmic
factors, not the factor $n^\varepsilon$.

\begin{theorem}[Fast evaluation and differentiation]\label{thm:main-algebra}
Let $\F$ be any field of characteristic two. Given any
$A\in\F^{n\times n}$, one can compute $\ph(A)$ and the entire matrix
\[
  \grad(A)_{j,i}=\frac{\partial\ph}{\partial A_{i,j}}(A)
\]
deterministically using $O(n^\tau)$ operations in $\F$. The algorithm
has no nonsingularity or field-size assumption. With matrix multiplication
and fast linear algebra using quadratic workspace, it stores $O(n^2)$
field elements.
\end{theorem}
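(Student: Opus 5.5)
The plan is to reduce $\ph$ to a weighted sum of complementary $2\times2$-deleted minors, evaluate that sum with matrix multiplication when $A$ is invertible, handle the singular case by a constant-size border, and obtain the gradient by reverse-mode differentiation.

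\textbf{Step 1: inversion-count identity.} The parity of $\pi$ equals $\mathrm{inv}(\pi)\bmod 2$, where $\mathrm{inv}(\pi)=\sum_{i<j}[\pi(i)>\pi(j)]$. Substituting this into \eqref{eq:definition} and regrouping by the inverted pair gives
\[
  \ph(A)=\sum_{i<j}\ \sum_{l<k} A_{i,k}A_{j,l}\,\det A_{\{i,j\}^c,\{k,l\}^c}.
\]
This uses that in characteristic two the permanent of the complementary submatrix equals its determinant, so each monomial with $\pi(i)=k>l=\pi(j)$ is counted exactly once per inversion. I would verify this identity carefully over $\mathbb Z$ and only then reduce modulo $2$.

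\textbf{Step 2: invertible case.} If $A$ is invertible with $B=A^{-1}$, Jacobi's complementary-minor identity gives
\[
  \det A_{\{i,j\}^c,\{k,l\}^c}=\det A\,(B_{k,i}B_{l,j}+B_{l,i}B_{k,j}).
\]
Hence $\ph(A)/\det A$ is a sum of products of four matrix entries, constrained by the order conditions $i<j$ and $l<k$. To evaluate it, I would decompose the row-index pairs $\{i<j\}$ along a binary interval tree on $[n]$. At the node where $i$ falls in the left half and $j$ in the right half, the constraint disappears. Doing the same for the column pairs $\{l<k\}$ on a second tree leaves, at each pair of nodes, an unconstrained sum. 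Such a sum is a trace of a product of blocks of the matrices $A$, $A^{T}$ and $B$ (for the two terms, $B_{k,i}B_{l,j}$ and $B_{l,i}B_{k,j}$).

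At level $t$ of the first tree and level $s$ of the second there are about $2^{t}\cdot 2^{s}$ block pairs of sizes $n/2^t\times n/2^s$. Each level pair therefore costs about the same as a few rectangular products. The total should be $\sum_{t,s}2^{t+s}\,(n/2^t)(n/2^s)\min(\cdot)^{\tau-2}=O(n^\tau)$, since $\tau>2$ makes both geometric series converge. Computing $B$ and $\det A$ also costs $O(n^\tau)$.

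\textbf{Step 3: gradient.} Given a straight-line program of length $L$ for $\ph$ using only ring operations, the Baur--Strassen theorem yields all partials in $O(L)$. This needs the program to avoid divisions, or to divide only by $\det A\neq0$, which is fine in the invertible case. Alternatively, differentiating the formula of Step 2 gives explicit matrix expressions of the same shape.

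\textbf{Step 4: singular matrices (expected main obstacle).} The first move is a deterministic rank-revealing LQUP decomposition, which reduces $A$ by row and column permutations together with unit-triangular operations. A transposition of rows or columns maps $\ph$ to $\det-\ph$. Adding $\lambda$ times row $j$ to row $i$ changes $\ph$ by $\lambda\,\ph(A')$, where $A'$ has two equal rows. For such an $A'$, pairing $\pi$ with $(i\,j)\circ\pi$ flips parity, so $\ph(A')$ equals the sum over a coset of $\langle(i\,j)\rangle$. That sum is again a minor-type quantity computable from Step 1.

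This handles the case where the rank deficiency is small. When the rank is $n-r$ with $r\ge3$, all $2\times2$-deleted minors vanish, so Step 1 gives $\ph(A)=0$ directly. The gradient is still nonzero only through minors with at most two further deletions. The remaining ranks $n-1$ and $n-2$ are then handled by bordering $A$ with a constant number of rows and columns of $0/1$ entries, chosen from the rank profile, so as to make the bordered matrix invertible. One then recovers $\ph(A)$ and $\grad(A)$ from $\ph$, $\det$ and the gradient of the bordered matrix, via Step 1 applied to the border indices.

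I expect the bookkeeping of this border, and keeping it field-size independent over $\F_2$, to be the hardest part. Throughout, workspace stays $O(n^2)$, because each level of the trees is processed with quadratic memory and then discarded.
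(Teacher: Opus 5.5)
Your Steps~1--3 follow the paper on invertible inputs: the same inversion-count expansion, the same Jacobi substitution, and the same pair of interval trees. Your per-pair cost estimate is correct provided each four-factor trace is cyclically rotated so that both intermediate products have the smaller output dimension. Baur--Strassen is an acceptable substitute there for the paper's explicit formula $\grad(A)=\Delta BCB+(F+\Delta)B$. The genuine gap is Step~4, which is where the absence of nonsingularity and field-size assumptions has to be earned.

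The LQUP route does not close. Each row-addition correction $\lambda\ph(A')$ is $\ph$ of a matrix with two equal rows. Its pairing expansion (as in Lemma~\ref{lem:row}) needs the $(n-2)$-minors of the \emph{singular} current matrix, and Jacobi's identity cannot supply these without an inverse. You also give no way to aggregate the corrections of a full reduction within $O(n^\tau)$. The bordering sentence specifies no border, no invertibility argument, and no recovery identity. Moreover, ``Step~1 applied to the border indices'' only reproduces odd-permutation sums of order-$(n-1)$ and order-$(n-2)$ submatrices of the same singular $A$.

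The missing idea is a border whose corner is a diagonal of free parameters. Take a rank-size nonsingular submatrix $A[I,J]$, and let $U,V$ consist of the standard basis columns of the $k$ complementary rows and columns. Set $M_X(t)=\left(\begin{smallmatrix}X&U\\ V^{\mathsf T}&\diag(t)\end{smallmatrix}\right)$. Then $\ph(M_X(t))$ is multilinear in $t$ with $[t_1\cdots t_k]\ph(M_X(t))=\ph(X)$. Hence in characteristic two $\ph(X)=\sum_{\varepsilon\in\{0,1\}^k}\ph(M_X(\varepsilon))$ identically in $X$. The Schur complement $\left(\begin{smallmatrix}0&I_k\\ I_k&\diag(t)\end{smallmatrix}\right)$ shows that $M_A(t)$ is invertible for every $t$, including $t=0$. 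This is what lets $2^k$ invertible evaluations work over $\F_2$.

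Since the border is frozen and the identity holds for all $X$, you may differentiate it to get $\grad(A)=\sum_{\varepsilon}\grad(M_A(\varepsilon))[[n],[n]]$. Differentiating an algorithm that branches on rank would instead give the derivative of the constant $0$ at corank at least three. Finally, you border only coranks one and two, but the gradient also needs corank three. For example, $J_4$ has corank three, yet $\partial\ph/\partial A_{1,1}(J_4)=\ph(J_3)=1$. The gradient vanishes only from corank four on, because every single-row replacement then has rank at most $n-3$.
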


Here $\grad$ is the transpose of the usual array of partial derivatives;
its $i$th column gives the coefficients for replacing row $i$. This
orientation makes both the matrix formula and its graph applications
particularly direct.

The combinatorial application is the shortest even directed-cycle
problem. A cycle is always directed and simple. Even-cycle recognition
was shown to be polynomial-time by McCuaig and, independently, Robertson,
Seymour, and Thomas~\cite{McCuaig2004,RobertsonSeymourThomas1999}.
Bj\"orklund, Husfeldt, and Kaski~\cite{BHK2024} subsequently gave a
randomized polynomial-time algorithm for computing the minimum length,
with running time $\softO(n^{\omega+3})$ in the customary exponent
notation. Their method evaluates a parity cycle-cover polynomial using
a ring of characteristic four. Evaluating the underlying parity
polynomial faster gives the following consequences.

\begin{theorem}[Length and shortest-cycle arc support]\label{thm:main-graph}
For a simple directed graph on $n$ vertices, a randomized algorithm
computes the shortest even-cycle length and the union of the arc sets
of all shortest even cycles in $\softO(n^{\tau+1})$ bit operations.
Both outputs are correct with probability $1-O(n^{-3})$. On a graph
without an even cycle it always reports that fact correctly.
The algorithm uses $O(n^2)$ field elements under the workspace
assumption of Theorem~\ref{thm:main-algebra}.
\end{theorem}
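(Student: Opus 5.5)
We reduce Theorem~\ref{thm:main-graph} to Theorem~\ref{thm:main-algebra} applied to $\ph(I+zW)$ over a truncated polynomial ring $\F[z]/(z^{n+1})$, with $\F=\F_{2^k}$ and $2^k=\Theta(n^{c})$ for a suitable constant $c$. Let $W_{u,v}=w_{uv}$ for each arc $uv$, with the weights drawn independently and uniformly from $\F$, and $W_{u,v}=0$ otherwise. A permutation $\pi$ with nonzero monomial in $I+zW$ is a disjoint union of directed cycles of $G$ together with fixed points. Its sign is $(-1)^{\#\text{even cycles}}$, and its $z$-degree is the total length of the cycles. Hence $\ph(I+zW)=\sum_{\mathcal C} z^{|\mathcal C|}\prod_{e\in\mathcal C}w_e$, where the sum runs over vertex-disjoint cycle families containing an odd number of even cycles.

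The key combinatorial claim is that the lowest degree $d$ with nonzero coefficient (as a polynomial in the $w$'s) equals the shortest even-cycle length $\ell$. Indeed, a single shortest even cycle contributes at degree $\ell$. Conversely, any family containing an odd number of even cycles contains at least one even cycle, so its total length is at least $\ell$. Any family of length exactly $\ell$ therefore consists of a single even cycle of length $\ell$, since extra cycles would add positive length. In characteristic two each monomial $\prod_e w_e$ identifies its arc set, which for a simple digraph determines the cycle, so distinct families give distinct monomials and no cancellation occurs. Thus the coefficient of $z^\ell$ is $P_\ell(w)=\sum_{C}\prod_{e\in C}w_e$, summed over shortest even cycles $C$, and all lower coefficients vanish identically. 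By Schwartz--Zippel, $P_\ell(w)\neq0$ with probability at least $1-\ell/|\F|$. If $G$ has no even cycle, then $\ph(I+zW)\equiv0$ identically, so the "no even cycle" answer is always correct.

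For the arc support, we use the gradient. For an arc $uv$, $\partial P_\ell/\partial w_{uv}$ is nonzero as a polynomial exactly when $uv$ lies on some shortest even cycle, because each monomial is multilinear. Since $\partial\ph(I+zW)/\partial w_{uv}=z\,\grad(I+zW)_{v,u}$, the $z^{\ell}$ coefficient of $\grad(I+zW)_{v,u}$ equals $\partial P_\ell/\partial w_{uv}(w)$. This is a polynomial of degree at most $\ell-1$, so by Schwartz--Zippel it is nonzero with probability at least $1-n/|\F|$. A union bound over at most $n^2$ arcs, plus the length test, gives failure probability $O(n^3/|\F|)=O(n^{-3})$ for $|\F|\ge n^6$. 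Both error types are one-sided: reported arcs and the reported length are never spurious, and all lower-degree coefficients vanish identically, so the length is never underestimated.

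It remains to account for the cost. We run the algorithm of Theorem~\ref{thm:main-algebra} over the algebra $R=\F[z]/(z^{n+1})$. This is the main obstacle, because the theorem is stated for fields, and its algorithm presumably uses inversions and rank decisions, which are not available over $R$. The plan is to exploit its bilinear-only use of multiplication, which the paper arranged precisely for this purpose, and to handle divisions as follows. The matrix $I+zW$ is invertible over $R$ with inverse $\sum_k (-zW)^k$, and its leading minors have unit constant term. So on the invertible path of the algorithm every pivot is a unit in $R$, and inversion of a unit costs $\softO(n)$ field operations. We would verify that the interval-tree aggregation only ever inverts such matrices, and that the Boolean border for the remaining ranks is not needed, or is applied to the residue field. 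Each $R$-operation costs $\softO(n)$ field operations and each field operation costs $\softO(\log n)$ bit operations. The total is therefore $O(n^\tau)\cdot\softO(n)=\softO(n^{\tau+1})$ bit operations, and since $R$-elements are stored as $n+1$ field elements, the workspace is $O(n^2)$ ring elements in that representation.
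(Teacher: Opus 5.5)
There is a genuine gap, and it is in the space bound. Your combinatorial analysis, error probability and time bound are sound. Over $\F[z]/(z^{n+1})$ the matrix $I+zW$ has constant term $I$, so block-recursive inversion meets only unit pivots. The identities of Sections~\ref{sec:fast} and~\ref{sec:gradient} hold in this ring, so value and gradient cost $\softO(n^{\tau+1})$; this is the paper's Lemma~\ref{lem:truncated} with $L=n$.

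The theorem, however, promises $O(n^2)$ \emph{field} elements. Your algorithm stores $n\times n$ matrices over the series ring, in particular $B=(I+zW)^{-1}=\sum_k z^kW^k \bmod z^{n+1}$. For dense graphs its entries generally carry all $n+1$ coefficients. That is $\Theta(n^3)$ field elements. Saying ``$O(n^2)$ ring elements in that representation'' changes the unit rather than meeting the bound. The paper explicitly notes that direct truncated-series computation can require cubic workspace.

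The missing idea is to avoid the series ring and use evaluation and interpolation over $\F$ itself, one point at a time. Fix $n+1$ points $\gamma\in\F$ and run the field algorithm of Theorem~\ref{thm:main-algebra} at each.
\begin{enumerate}
\item A first pass keeps only the $n+1$ scalars $\ph(I+\gamma W(\beta))$ and interpolates them to find $r$.
\item A second pass uses the Lagrange weights $\lambda_\gamma=[z^r]L_\gamma(z)$. At each point it computes $\grad(I+\gamma W(\beta))$ and adds $\lambda_\gamma\gamma\beta_{u,v}\grad(I+\gamma W(\beta))_{v,u}$ into one $n\times n$ accumulator.
\end{enumerate}
Since $\det(I+\gamma W)$ can vanish at some chosen points, this is exactly where the all-ranks evaluator is needed: the Boolean border of Lemma~\ref{lem:border} and Proposition~\ref{prop:scalar}, extended to gradients. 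Your route avoids singular inputs only by paying for it in space.

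Two smaller corrections.
\begin{itemize}
\item Since $\partial\ph(I+zW)/\partial w_{uv}=z\,\grad(I+zW)_{v,u}$, the coefficient you need is $[z^{\ell-1}]\grad(I+zW)_{v,u}$, not $[z^{\ell}]$.
\item Your outputs are not unconditionally ``never spurious.'' On the event $P_\ell(\beta)=0$ the returned degree can exceed $\ell$, and need not even be a cycle length, as the paper's $Q=z^8$ example shows. Arcs read off at that degree need not lie on shortest cycles. Only the no-underestimate property and correctness on even-cycle-free inputs hold on every execution.
\end{itemize}
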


The length alone needs a smaller field than the simultaneous support
guarantee. Under the promise that an even cycle, if present, has a unique
shortest member, setting every arc weight to one makes both computations
deterministic and recovers that cycle. Without uniqueness, the support
can combine arcs from different cycles, so it is not itself a witness.

\begin{theorem}[An actual shortest cycle]\label{thm:main-witness}
There is a randomized algorithm with worst-case running time
$\softO(n^4)$ that outputs a shortest even directed cycle with
probability $1-O(n^{-3})$ whenever one exists. Every cycle it outputs
is an actual simple even cycle of the input. On an even-cycle-free
graph it always reports that fact correctly. Its direct implementation
uses $O(n^3)$ field elements.
\end{theorem}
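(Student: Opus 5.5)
We plan to build the witness by self-reduction on top of the length algorithm of Theorem~\ref{thm:main-graph}. The difficulty is that vertex or arc deletion with a fresh length query costs $\softO(n^{\tau+1})$ per query. Naive self-reduction over all $O(n^2)$ arcs would then cost $\softO(n^{\tau+3})$, which is too much. The fix is to maintain one nonzero coefficient exactly and update it with the gradient of Theorem~\ref{thm:main-algebra} rather than recomputing it.

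\textbf{Setup.} Draw random arc weights $W$ from a field of characteristic two of size $\mathrm{poly}(n)$, and set $P(z)=\ph(I+zW)$. The cycle covers of $I+zW$ are disjoint unions of cycles with fixed points. The parity of the permutation equals the number of even cycles, so the coefficient of $z^k$ sums over families of disjoint cycles of total length $k$ containing an odd number of even cycles. For the smallest such $k$, any even cycle in the family has length at least $\ell$, the shortest even length. It remains to check that the lowest nonzero coefficient is exactly $z^\ell$, and that its monomials in $W$ are the shortest even cycles alone. This should follow the same argument used for Theorem~\ref{thm:main-graph}, which we take as given. Schwartz--Zippel then makes the coefficient $c_\ell(W)\ne0$ with probability $1-O(n^{-3})$.

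\textbf{Arc pruning.} Process the arcs one at a time and try to delete arc $e$, that is, set $W_e=0$. We keep $e$ deleted if the $z^\ell$ coefficient of the current $\ph(I+zW)$ stays nonzero. Since $c_\ell$ is multilinear in the entries of $W$, deleting $e$ changes $c_\ell$ by exactly $W_e\,\partial c_\ell/\partial W_e$. So we need, for all arcs at once, the $z^\ell$ coefficient of the gradient of $\ph(I+zW)$. Computing it over the truncated ring $\F[z]/(z^{\ell+1})$ is exactly Theorem~\ref{thm:main-algebra} applied with the admissible bilinear algorithm, at cost $\softO(n^{\tau}\cdot n)=\softO(n^{\tau+1})$.

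After a deletion the gradient changes, so we recompute it. To obtain $\softO(n^4)$ we batch the arcs by source vertex $u$. For one row $u$ of $W$, $c_\ell$ restricted to that row is linear in the row, because every cycle uses at most one arc out of $u$. The row's gradient vector, computed once, therefore answers every deletion within row $u$ exactly: we can greedily zero all but one out-arc of $u$ while keeping the linear form nonzero. That leaves $n$ rows, each with one gradient evaluation, for a total of $n\cdot\softO(n^{\tau+1})$. This is still too much if $\tau>3-\ldots$, so instead we maintain the gradient exactly under rank-one row updates. Each update is a Sherman--Morrison-type change of the complementary minors, which we expect to cost $\softO(n^3)$ per row using the explicit gradient formula from the paper. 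The total over $n$ rows is $\softO(n^4)$, with $O(n^3)$ stored elements: $O(n^2)$ entries, each a truncated polynomial of degree up to $n$.

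\textbf{Main obstacle and output.} After pruning, every vertex has out-degree at most one and $c_\ell\ne0$, so the surviving arcs contain a family of shortest even cycles. The remaining arcs in $\mathrm{supp}$ form a functional graph, and we read off one cycle, verify that it is simple and even, and output it. If the verification fails, the algorithm reports failure rather than outputting anything, so every output is a genuine even cycle. The even-cycle-free case is correct deterministically because then $P\equiv0$.

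The hardest step is the exact rank-one maintenance of the truncated gradient, since it must work when the matrix is singular over $\F[z]/(z^{\ell+1})$. We would first handle this by noting that $I+zW$ is invertible in that ring, because its constant term is $I$. Its inverse can then be updated by Sherman--Morrison with a unit denominator, and the minor-based formula for $\grad$ can be refreshed from that inverse in $O(n^2)$ ring operations per row change.
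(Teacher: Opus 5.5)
Your proposal is essentially the paper's proof. You keep the single random specialization used to detect the degree, and maintain $B=A^{-1}$, $\Delta$ and $F$ exactly in $\F[z]/(z^{\ell+1})$ under Sherman--Morrison row updates, with unit denominators because the loops stay fixed. For each row you compute only the current column $g=Bh$ in $O(n^2)$ ring operations, rather than the whole matrix $\Delta BCB+(F+\Delta)B$, and cut the row to at most one out-arc while the degree-$\ell$ coefficient stays nonzero, giving $\softO(n^3\ell)\le\softO(n^4)$ time and $O(n^3)$ space.

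Two steps need to be stated precisely, as the paper does. First, the selection rule: the row form has a loop term $L_u=[z^\ell]g_u$, so you must delete every out-arc of $u$ when $L_u\ne0$, and otherwise keep one arc with nonzero contribution $E_{uj}$. Keeping exactly one arc, or deleting greedily one arc at a time, can fail through characteristic-two cancellation. Second, output the shortest even cycle of the resulting functional graph, not an arbitrary cycle.
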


If the shortest length is $\ell$, complementary extraction methods give,
with probability $1-O(n^{-3})$, both a shortest cycle and the time bound
\begin{equation}\label{eq:intro-parameter}
 \softO\!\left(\min\left\{
 n^3\ell,\quad n^\tau\ell^2,\quad
 (n-\ell+1)n^{\tau+1}+\ell^3
 \right\}\right).
\end{equation}
The algorithm enforcing this bound still takes at most $\softO(n^4)$
time on every random execution. The three methods have different space
requirements; the last uses quadratic space, while the first two use
truncated series and can use cubic space. Details appear in
Appendix~\ref{app:parameter}.

\paragraph{The algebraic mechanism.}
In characteristic two, the inversion count of a permutation is its
oddness indicator. Expanding this count gives a sum of products of two
entries and a complementary determinant. When $A$ is invertible,
Jacobi's identity expresses these determinants through $A^{-1}$.
One resulting sum takes quadratic time by prefix and suffix summation.
For the other, a pair of binary interval trees assigns each ordered
index quadruple to exactly one block trace. Evaluating these traces
with the smaller output dimension and summing over all scales costs
$O(n^\tau)$.

The same complementary-minor identity shows that $\ph$ vanishes at
corank at least three. The two remaining singular ranks are handled by
at most four invertible bordered matrices, even over $\F_2$.
A proportional-row identity then gives an explicit full-gradient
formula. Gradients can remain nonzero at corank three; at most eight
bordered evaluations handle this case. These ingredients separate the
static algebraic result from the dynamic operations needed to isolate
a graph witness.

\paragraph{Relation to earlier work.}
The parity cycle-cover polynomial and its relation to the permanent
and determinant originate in the framework of Vazirani and
Yannakakis~\cite{VaziraniYannakakis1989}. Artificial loops and a degree
variable are also part of the framework used in~\cite{BHK2024}.
Equal-row expansions and elimination appear in algorithms for permanents
modulo powers of two, beginning with Valiant~\cite{Valiant1979}; see also
Braverman--Kulkarni--Roy~\cite{BKR2009},
Datta--Jaiswal~\cite{DattaJaiswal2021}, and
Bj\"orklund--Husfeldt~\cite{BH2019}. We use this proportional-row
expansion and do not claim it as a new combinatorial device.
For binary matrices, lifting to integers gives
\[
 \ph(A)=\left(\frac{\per(\widehat A)-\det(\widehat A)}2\right)\bmod2,
\]
so modular-permanent methods, including
Bj\"orklund--Husfeldt--Lyckberg~\cite{BHL2017}, already provide a
route to deterministic evaluation. The division here is performed on
an even integer, before reduction modulo two. Our result is a uniform
matrix multiplication bound over every characteristic-two field,
including all singular inputs.

Likewise, obtaining all first derivatives at the scale of scalar
arithmetic complexity has the classical background of
Baur--Strassen~\cite{BaurStrassen1983}. Our contribution in this part
is the explicit matrix expression, its extension to all ranks, and
the resulting arc and vertex queries. The exponent improvement comes
from aggregating complementary minors and completing singular inputs
with a constant-size border; the name of the coefficient characteristic
alone does not explain it.

\paragraph{Organization.}
Section~\ref{sec:prelim} gives notation and algebraic tools.
Sections~\ref{sec:fast}--\ref{sec:gradient} prove
Theorem~\ref{thm:main-algebra}. Section~\ref{sec:graphs} proves the
graph length, support, and uniqueness results.
Section~\ref{sec:witness} gives the general witness algorithm, and
Appendix~\ref{app:parameter} proves the parameter bounds.

\section{Preliminaries}\label{sec:prelim}

Write $[n]=\{1,\ldots,n\}$. For equally sized index sets $I,J$,
$A[I,J]$ denotes the submatrix with its inherited row and column order,
and $A[-I,-J]=A[[n]\setminus I,[n]\setminus J]$. The determinant of
the empty matrix is one. Set $\ph(A)=0$ for matrices of order zero
or one. All algebra below is in characteristic two. When ring identities
are used, the ring is commutative and a matrix called invertible has an
inverse over that ring.

Jacobi's complementary-minor identity~\cite{HornJohnson2013} states that,
for invertible $A$, with $B=A^{-1}$ and $\Delta=\det A$,
\begin{equation}\label{eq:jacobi}
 \det A[-\{i,j\},-\{k,l\}]
   =\Delta(B_{k,i}B_{l,j}+B_{k,j}B_{l,i})
       \qquad(i\ne j,\ k\ne l).
\end{equation}
All sign factors vanish. Also $\per M=\det M$ for any square matrix
$M$ in characteristic two.

We use matrix multiplication based algorithms for determinant, inverse,
rank, and the selection of a nonsingular rank-size submatrix, all within
$O(n^\tau)$ field operations. The relevant reductions and workspace
implementations are described by Bunch--Hopcroft~\cite{BunchHopcroft1974}
and Jeannerod--Pernet--Storjohann~\cite{JPS2013}. In particular, a cubic
rank or inverse routine cannot be substituted if $\tau<3$ is claimed.

For a permutation $\pi$, let $\kappa(\pi)$ count its cycles, including
fixed points. Then $\sgn(\pi)=(-1)^{n-\kappa(\pi)}$. Thus $\ph(A)$
is precisely the parity cycle-cover enumerator whose number of cycles
has parity opposite to $n$. Equivalently, an odd permutation contains
an odd number of even-length cycles. Over the integers,
$2\ph=\per-\det$; definition~\eqref{eq:definition} gives the
corresponding polynomial directly in characteristic two.

For graph algorithms we use $\F_{2^d}$ with $d=O(\log n)$, represented
by an irreducible polynomial over $\F_2$~\cite{LidlNiederreiter1997}.
One field operation costs $\softO(d)$ bit operations, and deterministic
construction of the field takes polynomial time in $d$ over the fixed
base field~\cite{Shoup1990,GathenGerhard2013}.
We use the Schwartz--Zippel bound~\cite{Schwartz1980,Zippel1979}:
if a nonzero polynomial of total degree at most $r$ is evaluated at
independent uniform elements of a finite field $\F$, its value is zero
with probability at most $r/|\F|$.

We will interpolate polynomials of degree at most $n$ at $n+1$ fixed,
distinct field elements. This costs $O(n^2)$ field operations: form
$T(z)=\prod_\gamma(z-\gamma)$, obtain each quotient
$T(z)/(z-\gamma)$ by linear-time division, and normalize it by its
value at $\gamma$ to obtain the Lagrange polynomial $L_\gamma(z)$.
All quotients and the final interpolation can be processed successively.
For any chosen degree $r$, the weights
$\lambda_\gamma=[z^r]L_\gamma(z)$ are available at the same cost.

\section{Fast Evaluation on Invertible Matrices}\label{sec:fast}

\subsection{An inversion-count identity}

\begin{lemma}[Complementary-minor expansion]\label{lem:inversions}
For every square matrix $A$ over a commutative ring of characteristic two,
\begin{equation}\label{eq:inversions}
 \ph(A)=\sum_{i<j}\sum_{l<k}
 A_{i,k}A_{j,l}\det A[-\{i,j\},-\{k,l\}].
\end{equation}
Consequently, over a field,
\begin{equation}\label{eq:rank-zero}
 \rank A\le n-3\quad\Longrightarrow\quad\ph(A)=0.
\end{equation}
\end{lemma}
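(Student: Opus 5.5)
The identity follows from the fact that in characteristic two the parity of a permutation equals its inversion count reduced modulo two. Concretely, $\pi$ is odd exactly when the number of pairs $i<j$ with $\pi(i)>\pi(j)$ is odd, so over a ring of characteristic two the indicator of oddness equals $\sum_{i<j}[\pi(i)>\pi(j)]$, with the integer sum read in the ring. Substituting this into the definition~\eqref{eq:definition} and exchanging the order of summation gives
\[
\ph(A)=\sum_{i<j}\ \sum_{\pi:\ \pi(i)>\pi(j)}\ \prod_{m} A_{m,\pi(m)}.
\]
Fix $i<j$ and split by the values $k=\pi(i)$ and $l=\pi(j)$, with $l<k$. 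Factoring out $A_{i,k}A_{j,l}$, the remaining sum ranges over all bijections $[n]\setminus\{i,j\}\to[n]\setminus\{k,l\}$ and carries no sign. It is therefore $\per A[-\{i,j\},-\{k,l\}]$, which equals $\det A[-\{i,j\},-\{k,l\}]$ in characteristic two. This argument works over any commutative ring of characteristic two, because the permanent expansion is purely formal. The point needing care is that we never reduce a sign $(-1)^{\mathrm{inv}}$. Instead we use the $\{0,1\}$-valued indicator directly, which equals the inversion count modulo $2$, and multiplication by an integer is meaningful in any ring.

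For~\eqref{eq:rank-zero}, suppose $\rank A\le n-3$. Deleting two rows and two columns leaves an $(n-2)\times(n-2)$ submatrix. Its rank is at most $\rank A\le n-3<n-2$, so it is singular and its determinant vanishes. Every term of~\eqref{eq:inversions} is then zero, and hence $\ph(A)=0$. The edge cases $n\le 2$ need a brief check. If $n\le 2$, then $\rank A\le n-3$ is impossible unless we allow negative rank, so the implication holds vacuously. For $n\le1$ the double sum is empty, which matches the convention $\ph(A)=0$.

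There is no real obstacle here. The only step that deserves a moment's attention is the bookkeeping when we regroup by $(k,l)$. We must check that each odd-inversion contribution is counted once for every inverting pair $(i,j)$, so that the multiplicity is the inversion count, and that this multiplicity reduces to the correct parity.
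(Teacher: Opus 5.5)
Your proof is correct and follows the paper's argument essentially step for step. You read the oddness indicator as the inversion count in characteristic two, interchange the sums, and fix an inversion $i<j$ with $\pi(i)=k>l=\pi(j)$ to get the complementary permanent, which equals the determinant. The rank claim then follows because every minor of order $n-2$ vanishes when $\rank A\le n-3$. Your checks of the cases $n\le 2$ agree with the paper's conventions.
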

\begin{proof}
The parity of the integer
$\#\{(i,j):i<j,\ \pi(i)>\pi(j)\}$ is the parity of $\pi$.
Multiply each permutation monomial by its inversion count and interchange
the sums. Fixing an inversion $i<j$ with $\pi(i)=k>l=\pi(j)$ leaves
the permanent of the complementary submatrix. This permanent equals its
determinant and gives~\eqref{eq:inversions}. If the rank is at most
$n-3$, every determinant of order $n-2$ is zero.
\end{proof}

For invertible $A$, substitution of~\eqref{eq:jacobi} yields
\begin{align}
 \ph(A)&=\Delta(S+T),\label{eq:ST}\\
 S&=\sum_{i<j,\,l<k}
       A_{i,k}B_{k,i}A_{j,l}B_{l,j},\label{eq:S}\\
 T&=\sum_{i<j,\,l<k}
       A_{i,k}B_{k,j}A_{j,l}B_{l,i}.\label{eq:T}
\end{align}
The first sum costs $O(n^2)$ operations. Set $X_{i,k}=A_{i,k}B_{k,i}$
and process rows $j$ in increasing order, keeping column sums
$c_k=\sum_{i<j}X_{i,k}$. A reverse scan of the columns gives
\[
  \sum_l X_{j,l}\sum_{k>l}c_k
\]
in $O(n)$ operations. Add row $j$ to $c$ only after computing its
contribution, preserving both strict inequalities.

\subsection{Two interval trees and block traces}

The second sum admits a simultaneous decomposition of its row and column
inequalities. Pad to order $N$, the least power of two at least $n$,
by replacing $A$ by $A\oplus I_{N-n}$. This does not change $\ph$:
every added index must be a fixed point. Its inverse is
$B\oplus I_{N-n}$ and its determinant remains $\Delta$.

Build a complete binary interval tree on the ordered row indices $[N]$
and another on the ordered column indices. Denote the left and right
children of a row node by $I,J$, and those of a column node by $L,K$.
For one pair of internal nodes define
\begin{equation}\label{eq:blocktrace}
 T_{I,J;L,K}
 =\tr\!\left(A[I,K]B[K,J]A[J,L]B[L,I]\right).
\end{equation}

\begin{lemma}[Unique block assignment]\label{lem:block}
For any independent matrices $A,B$ over a commutative ring, the sum in
\eqref{eq:T} equals the sum of~\eqref{eq:blocktrace} over all pairs
of internal row and column nodes.
\end{lemma}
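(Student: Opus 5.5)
Expanding the trace in \eqref{eq:blocktrace} entrywise gives
\[
 T_{I,J;L,K}=\sum_{i\in I}\sum_{k\in K}\sum_{j\in J}\sum_{l\in L}
 A_{i,k}B_{k,j}A_{j,l}B_{l,i},
\]
which is exactly the summand of \eqref{eq:T} restricted to quadruples with $i\in I$, $j\in J$, $l\in L$, $k\in K$. This step uses only commutativity of the ring to rearrange the product. Because $A$ and $B$ are arbitrary, the claim reduces to a combinatorial statement about index quadruples. Every quadruple $(i,j,l,k)$ with $i<j$ and $l<k$ must lie in exactly one product $I\times J\times L\times K$ over pairs of internal nodes, and no other quadruple may be covered. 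Summing the displayed identity over all node pairs then gives \eqref{eq:T}.

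I would first establish the one-dimensional fact for the row tree. For $i<j$ in $[N]$, let $v$ be the lowest common ancestor of the leaves $i$ and $j$. The node $v$ is internal, and since $i\ne j$, the two leaves lie in different children of $v$. Because the children are ordered intervals with the left child preceding the right, $i<j$ forces $i$ into the left child $I$ and $j$ into the right child $J$.

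For uniqueness, I would argue that any internal node whose left child contains $i$ and whose right child contains $j$ contains both leaves. Its children separate them, so it must be the lowest common ancestor. Conversely, for any internal node, every element of its left child is less than every element of its right child. Hence each pair in $I\times J$ satisfies $i<j$.

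The same argument applied to the column tree handles the condition $l<k$. For $l<k$, the lowest common ancestor of $l$ and $k$ places $l$ in the left child $L$ and $k$ in the right child $K$, uniquely. Since the row node and the column node are chosen independently, the quadruple $(i,j,l,k)$ lies in exactly one product $I\times J\times L\times K$, namely the one determined by the two lowest common ancestors. Every quadruple appearing in such a product satisfies both inequalities, so the restricted sums partition the sum \eqref{eq:T}.

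There is no real obstacle. The only point needing care is orientation. The trace pairs the row node's left child $I$ with the column node's right child $K$ through the factor $A[I,K]$, and $J$ with $L$ through $A[J,L]$. So I must check that the row condition $i<j$ corresponds to left and right, and that the column condition $l<k$ corresponds to $L$ being the left child and $K$ the right child. This matches the labelling in the lemma. Padding to $N$ is irrelevant here because the identity holds for any order $N$ that is a power of two.
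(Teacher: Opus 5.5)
Your proof is correct and follows the paper's own argument: expand the block trace entrywise, then use the lowest common ancestor in each interval tree to assign every quadruple with $i<j$ and $l<k$ to exactly one node pair. You add the converse check that every quadruple in $I\times J\times L\times K$ satisfies both inequalities, which the paper leaves implicit. One small correction: the entrywise trace expansion keeps the factors in the order $A_{i,k}B_{k,j}A_{j,l}B_{l,i}$, so that step does not use commutativity at all.
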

\begin{proof}
For $i<j$, the lowest common ancestor of leaves $i,j$ is the unique
row node with $i$ in its left child and $j$ in its right child.
Likewise, $l<k$ selects a unique column node with $l\in L$ and $k\in K$.
Expanding the trace in~\eqref{eq:blocktrace} gives exactly
\[
 \sum_{i\in I,j\in J,l\in L,k\in K}
 A_{i,k}B_{k,j}A_{j,l}B_{l,i}.
\]
Each quadruple is therefore counted once.
\end{proof}

\begin{lemma}[Cost of all block traces]\label{lem:trace-cost}
All block traces together can be evaluated in $O(N^\tau)$ operations.
With quadratic-workspace multiplication, they require $O(N^2)$ space.
\end{lemma}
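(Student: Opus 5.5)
We organize the trees by level. At depth $d$ there are $2^d$ row nodes and $2^d$ column nodes. Each node has children of size $s=N/2^{d+1}$. The trace \eqref{eq:blocktrace} is $\tr(P\,Q)$ with $P=A[I,K]B[K,J]$ and $Q=A[J,L]B[L,I]$. Both products are $s\times s$, and each costs $O(s^\tau)$. Once $P$ and $Q$ are known, the trace of their product costs only $O(s^2)$, since $\tr(PQ)=\sum_{a,b}P_{a,b}Q_{b,a}$. A naive count at depth $d$ gives $4^d$ pairs, each costing $O(s^\tau)$, so $4^d(N/2^{d})^\tau=N^\tau 2^{d(2-\tau)}$. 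Because $\tau>2$, this decreases geometrically in $d$. Summing over $\log N$ depths therefore gives $O(N^\tau)$. The naive count still needs checking, though: pairs of nodes at different depths must also be handled.

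That check is the main obstacle. A row node at depth $d$ paired with a column node at depth $d'\ne d$ gives rectangular blocks of sizes $s_r=N/2^{d+1}$ and $s_c=N/2^{d'+1}$. There are $2^{d}2^{d'}$ such pairs. Here the abstract's hint applies: evaluate with the smaller output dimension. Suppose $s_r\le s_c$. Form $P=A[I,K]B[K,J]$, which is $s_r\times s_r$ with inner dimension $s_c$, and $Q=A[J,L]B[L,I]$, which is also $s_r\times s_r$. Splitting each product into $s_c/s_r$ square products of size $s_r$ costs $O((s_c/s_r)s_r^\tau)$. The cost of one depth pair is then $2^{d+d'}\cdot N\,2^{-d'}\,(N2^{-d})^{\tau-1}$, which equals $N^\tau 2^{d(2-\tau)}$ up to constants. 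This depends only on the deeper node level $d$ of the smaller blocks, not on $d'$.

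Summing over $d'\le d$ contributes a factor of $d+1$. The total is then $N^\tau\sum_d (d+1)2^{-d(\tau-2)}=O(N^\tau)$, a convergent series because $\tau>2$ is fixed, with constants depending on $\tau$. The case $s_c\le s_r$ is symmetric: compute $B[K,J]A[J,L]$ and $B[L,I]A[I,K]$ instead, both $s_c\times s_c$, and use cyclicity of the trace. I expect the bookkeeping of this rectangular case to be the only delicate point. Getting it wrong, for example by always forming the larger products, yields an extra $\log$ factor or worse.

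For space, the blocks are processed one at a time. Every intermediate $P$ and $Q$ has at most $N^2$ entries. Each trace is added into a single accumulator for $T$ and then discarded. With a quadratic-workspace multiplication routine, the total space is therefore $O(N^2)$, and the Lemma follows. We also use that $N<2n$, so $O(N^\tau)=O(n^\tau)$.
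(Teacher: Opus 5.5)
Your proof is correct and follows essentially the same route as the paper. Both arguments use the smaller output dimension together with cyclic rotation of the trace, which gives a per-pair cost $O(\max(s,t)\min(s,t)^{\tau-1})$; both observe that a depth pair's total cost depends only on the deeper level; both sum the resulting $(2h+1)$-weighted geometric series, which converges since $\tau>2$; and both process node pairs sequentially to keep the space at $O(N^2)$.
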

\begin{proof}
Put $s=|I|=|J|$ and $t=|L|=|K|$. If $s\le t$, compute
\[
 U=A[I,K]B[K,J],\qquad V=A[J,L]B[L,I].
\]
Both outputs are $s\times s$. Split the shared dimension $t$ into
$t/s$ blocks of length $s$. Each product then costs
$O((t/s)s^\tau)=O(ts^{\tau-1})$, including additions. Computing
$\tr(UV)=\sum_{a,b}U_{a,b}V_{b,a}$ costs only $O(s^2)$.
If $t<s$, cyclically rotate the trace and use
$B[K,J]A[J,L]$ and $B[L,I]A[I,K]$, whose outputs are $t\times t$.
Thus one node pair costs
\begin{equation}\label{eq:rect-cost}
 O\!\left(\max(s,t)\min(s,t)^{\tau-1}\right).
\end{equation}
This bound uses the smaller output size; padding all products to the
larger side would not give the claimed complexity.

At row depth $a$ and column depth $b$, there are $2^{a+b}$ pairs,
with $s=N/2^{a+1}$ and $t=N/2^{b+1}$. Their total cost is
\[
 O\!\left(N^\tau 2^{-(\tau-2)\max(a,b)}\right).
\]
Summing over depths gives
\begin{equation}\label{eq:scale-sum}
 O\!\left(N^\tau\sum_{h\ge0}(2h+1)2^{-(\tau-2)h}\right)
 =O(N^\tau),
\end{equation}
because $\tau>2$ is fixed. Input copying, if used, takes at most
$O(N^2\log^2N)$ additional operations, which is absorbed by this bound.
Process node pairs and shared-dimension blocks one at a time, retaining
only the inputs and the current products. This uses $O(N^2)$ space.
\end{proof}

At the endpoint $\tau=2$, this particular summation would give
$O(n^2\log^2n)$; no endpoint claim is needed here.
Computing $B,\Delta$ by fast linear algebra, followed by $S,T$, proves
the invertible case of the scalar evaluation theorem.

\section{Singular Inputs by a Boolean Border}\label{sec:border}

The low-rank vanishing in~\eqref{eq:rank-zero} leaves only coranks
one and two to handle for scalar evaluation. The following construction
also prepares the extension of derivatives to corank three.

Let $A$ have rank $r=n-k$. Choose index sets $I,J$ of size $r$ such
that $A[I,J]$ is invertible. Write the complementary rows and columns
in inherited order as $i_1,\ldots,i_k$ and $j_1,\ldots,j_k$, and put
\begin{equation}\label{eq:border}
 U=(e_{i_1},\ldots,e_{i_k}),\quad
 V=(e_{j_1},\ldots,e_{j_k}),\qquad
 M_X(t)=\begin{pmatrix}X&U\\ V^{\mathsf T}&\diag(t_1,\ldots,t_k)\end{pmatrix}.
\end{equation}
The original block retains its original ordering.

\begin{lemma}[Invertibility and coefficient extraction]\label{lem:border}
Every $M_A(t)$ is invertible, for every $t\in\F^k$. With $U,V$ fixed
as above, the polynomial identity
\begin{equation}\label{eq:boolean}
 \ph(X)=\sum_{\varepsilon\in\{0,1\}^k}\ph(M_X(\varepsilon))
\end{equation}
holds for the entire matrix of indeterminates $X$.
\end{lemma}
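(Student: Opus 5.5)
The plan is to prove the two claims separately. Invertibility of $M_A(t)$ comes from a Schur-complement computation that uses only the rank of $A$. The identity \eqref{eq:boolean} comes from the fact that $\ph(M_X(t))$ is multilinear in $t$, so a Boolean-cube sum in characteristic two picks out the top coefficient.

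For invertibility, I would permute rows and columns, which does not affect invertibility. Order the rows as $I$, then $i_1,\ldots,i_k$, then the $k$ border rows, and the columns as $J$, then $j_1,\ldots,j_k$, then the $k$ border columns. Writing $P=A[I,J]$, $Q=A[I,J^c]$, $R=A[I^c,J]$ and $S=A[I^c,J^c]$, the matrix becomes
\[
\begin{pmatrix}P&Q&0\\ R&S&I_k\\ 0&I_k&D\end{pmatrix},\qquad D=\diag(t_1,\ldots,t_k).
\]
This relies on $U$ having its $m$th column equal to $e_{i_m}$, which places $I_k$ against the complementary rows, and likewise for $V^{\mathsf T}$ against the complementary columns. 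Since $P$ is invertible, the determinant equals $\det P$ times the determinant of the Schur complement
\[
\begin{pmatrix}S-RP^{-1}Q&I_k\\ I_k&D\end{pmatrix}.
\]
Because $\rank A=r=\rank P$, the classical Schur-complement rank identity gives $S-RP^{-1}Q=0$. The remaining matrix $\begin{pmatrix}0&I_k\\ I_k&D\end{pmatrix}$ has determinant $\pm1$, which is $1$ in characteristic two, for every $t$. Hence $\det M_A(t)=\det P\neq0$.

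For \eqref{eq:boolean}, I would expand $\ph(M_X(t))$ over odd permutations $\sigma$ of $[n+k]$ and look at the border index $n+m$. Its row has nonzero entries only in column $j_m$ (entry $1$) and on the diagonal (entry $t_m$). Its column has nonzero entries only in row $i_m$ (entry $1$) and on the diagonal. So each monomial either contains $t_m$ exactly once, when $\sigma$ fixes $n+m$, or does not contain $t_m$ at all. Thus $\ph(M_X(t))$ is multilinear in $t_1,\ldots,t_k$, say $\sum_{T\subseteq[k]}c_T(X)\prod_{m\in T}t_m$. Summing over $\varepsilon\in\{0,1\}^k$, the monomial $\prod_{m\in T}\varepsilon_m$ contributes $2^{k-|T|}$ times, which vanishes in characteristic two unless $T=[k]$. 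The sum therefore equals $c_{[k]}(X)$. This coefficient collects exactly the odd permutations fixing every border index. These correspond bijectively, and parity-preservingly, to the odd permutations of $[n]$, with the same monomial in $X$ since the top-left block is $X$ in its original order. Hence $c_{[k]}(X)=\ph(X)$.

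I do not expect a real obstacle. The one step needing care is the bookkeeping in the invertibility argument. One must check that the reordering really produces the identity blocks in the stated positions, with the $m$th border column matched to $i_m$ and the $m$th border row to $j_m$. One must also invoke the vanishing of the Schur complement $S-RP^{-1}Q$, which is where the choice of $I,J$ with $|I|=|J|=\rank A$ is used. The polynomial identity holds for indeterminate $X$ because the argument never uses any property of $X$; the rank hypothesis matters only for invertibility at $X=A$.
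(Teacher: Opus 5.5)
Your proposal is correct and follows essentially the same route as the paper: both eliminate the invertible rank block to leave the Schur complement $\begin{pmatrix}0&I_k\\ I_k&\diag(t)\end{pmatrix}$, and both use multilinearity in $t$ so that the Boolean-cube sum in characteristic two isolates the coefficient of $t_1\cdots t_k$, which equals $\ph(X)$. The differences are cosmetic. You state the vanishing $S-RP^{-1}Q=0$ explicitly via rank additivity, which the paper leaves implicit, and you compute the determinant of the residual block directly, where the paper argues through its kernel.
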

\begin{proof}
For the invertibility assertion only, permute the first $n$ rows and
columns so that $A[I,J]$ is first. Eliminating this rank-size block
leaves the Schur complement
\[
 \begin{pmatrix}0&I_k\\ I_k&\diag(t)\end{pmatrix}.
\]
It is invertible: a vector in its kernel has its last $k$ coordinates
zero from the first block row, then its first $k$ coordinates zero
from the second. The initial rank block is invertible as well.

The polynomial $\ph(M_X(t))$ has degree at most one in each $t_i$.
A monomial containing all $t_i$ fixes every new index, and its
remaining permutation has exactly its original parity. Hence
\[
 [t_1\cdots t_k]\ph(M_X(t))=\ph(X).
\]
Summing a multilinear polynomial over $\{0,1\}^k$ in characteristic
two kills every monomial missing a variable and preserves the full
product. This proves~\eqref{eq:boolean} as a polynomial identity.
\end{proof}

Independent row and column permutations need not preserve $\ph$.
The permutations in the proof analyze invertibility only; evaluation
in~\eqref{eq:boolean} uses the matrix in~\eqref{eq:border}.

\begin{proposition}[Scalar evaluation at every rank]\label{prop:scalar}
The value $\ph(A)$ of any $n\times n$ matrix over any
characteristic-two field can be computed deterministically in
$O(n^\tau)$ field operations.
\end{proposition}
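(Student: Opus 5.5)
We first compute the rank $r=n-k$ of $A$ with a fast linear-algebra routine in $O(n^\tau)$ operations, and then split into cases on the corank $k$.

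\emph{Corank zero.} If $k=0$, $A$ is invertible. We compute $B=A^{-1}$ and $\Delta=\det A$ in $O(n^\tau)$. By \eqref{eq:ST}, $\ph(A)=\Delta(S+T)$. The sum $S$ costs $O(n^2)$ by the prefix/suffix scan. For $T$, we pad to order $N<2n$. By Lemma~\ref{lem:block}, $T$ is the sum of the block traces, and by Lemma~\ref{lem:trace-cost} all of them cost $O(N^\tau)=O(n^\tau)$.

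\emph{Corank at least three.} If $k\ge3$, then \eqref{eq:rank-zero} of Lemma~\ref{lem:inversions} gives $\ph(A)=0$, and we output that with no further work.

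\emph{Corank one or two.} We select index sets $I,J$ of size $r$ with $A[I,J]$ invertible, again in $O(n^\tau)$ by the rank-profile routines of Jeannerod--Pernet--Storjohann. We then form the bordered matrices $M_A(\varepsilon)$ of \eqref{eq:border} for each $\varepsilon\in\{0,1\}^k$; there are at most four of them. Each has order $n+k\le n+2$, and by Lemma~\ref{lem:border} each is invertible for every $t$, in particular for Boolean $t$. This holds over every field, including $\F_2$, since no random choices are needed. Specializing the polynomial identity \eqref{eq:boolean} at $X=A$ gives
\[
\ph(A)=\sum_{\varepsilon}\ph(M_A(\varepsilon)).
\]
Each summand is evaluated by the invertible case in $O((n+2)^\tau)=O(n^\tau)$. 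The total is a constant number of such calls plus $O(n^2)$ for assembling the borders.

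\emph{Main obstacle.} Almost everything has already been established earlier, so the one point needing care is the specialization of the identity in Lemma~\ref{lem:border}. It is an identity of polynomials in the indeterminate matrix $X$, with $U,V$ fixed by the choice of $I,J$ for the specific $A$. Specializing $X=A$ is legitimate because evaluation is a ring homomorphism. Invertibility is required only at $X=A$, which is exactly what the lemma asserts.

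Beyond that, only routine checks remain. Rank computation and selection of a nonsingular rank-size submatrix must be done within $O(n^\tau)$; a cubic routine would not suffice. The padding to a power of two at most doubles the order. The algorithm is deterministic throughout.
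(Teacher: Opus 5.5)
Your proposal is correct and follows the paper's proof essentially step for step. You compute the rank, use the invertible evaluator of Section~\ref{sec:fast} at corank zero, return zero by \eqref{eq:rank-zero} at corank at least three, and at corank one or two sum at most four invertible bordered evaluations via \eqref{eq:boolean}. The trivial orders $n\le1$, which the paper handles separately, are omitted, but this is harmless because your case analysis already returns zero there.
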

\begin{proof}
Handle $n\le1$ directly, and compute the rank and a rank-size
invertible submatrix by fast linear algebra. At corank zero use
Section~\ref{sec:fast}. At corank at least three return zero by
Lemma~\ref{lem:inversions}. At corank $k=1,2$, use
Lemma~\ref{lem:border} and at most $2^k\le4$ invertible evaluations
of order at most $n+2$. This also works over $\F_2$.
All evaluations can be processed successively with quadratic space.
\end{proof}

\section{An Explicit Gradient at Every Rank}\label{sec:gradient}

\subsection{Proportional rows and replacement coefficients}

\begin{lemma}[Proportional-row identity]\label{lem:row}
Let $A$ be invertible over a commutative ring of characteristic two,
$B=A^{-1}$, and $\Delta=\det A$. If $p\ne q$ and $N$ replaces row
$q$ of $A$ by $tA_{p,*}$, then
\begin{equation}\label{eq:row-identity}
 \ph(N)=t\Delta\sum_j A_{p,j}^2B_{j,p}B_{j,q}.
\end{equation}
\end{lemma}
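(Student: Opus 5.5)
The plan is to reduce to a matrix with two equal rows, compute $\ph$ of that matrix directly by pairing permutations, and then convert the resulting complementary minors into entries of $B$ with Jacobi's identity~\eqref{eq:jacobi}.

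First, $\ph$ is linear in each row. So $\ph(N)=t\,\ph(N')$, where $N'$ is $A$ with row $q$ replaced by $A_{p,*}$. It therefore suffices to show $\ph(N')=\Delta\sum_j A_{p,j}^2B_{j,p}B_{j,q}$.

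Next, in $N'$ rows $p$ and $q$ coincide. Pair each permutation $\pi$ with $\pi'=\pi\circ(p\,q)$, which swaps the columns assigned to rows $p$ and $q$. The two permutations give the same monomial, because the rows are equal, and they have opposite parity. Hence each pair contributes its common monomial exactly once to the odd sum. Choose as representative of each pair the permutation with $\pi(p)=k<l=\pi(q)$. The remaining rows then range over all bijections $[n]\setminus\{p,q\}\to[n]\setminus\{k,l\}$ with no parity restriction. This gives
\[
 \ph(N')=\sum_{k<l}A_{p,k}A_{p,l}\per A[-\{p,q\},-\{k,l\}]
 =\sum_{k<l}A_{p,k}A_{p,l}\det A[-\{p,q\},-\{k,l\}],
\]
since permanent equals determinant in characteristic two. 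This step uses only commutativity, so it holds over any commutative ring of characteristic two.

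Now apply~\eqref{eq:jacobi} with row pair $\{p,q\}$ and column pair $\{k,l\}$. Each minor becomes $\Delta(B_{k,p}B_{l,q}+B_{k,q}B_{l,p})$. The coefficient $A_{p,k}A_{p,l}$ is symmetric in $k,l$, so the sum over $k<l$ of the two terms equals the sum over all ordered $k\ne l$ of $A_{p,k}A_{p,l}B_{k,p}B_{l,q}$. Adding and subtracting the diagonal $k=l$ gives
\[
 \Bigl(\sum_k A_{p,k}B_{k,p}\Bigr)\Bigl(\sum_l A_{p,l}B_{l,q}\Bigr)-\sum_k A_{p,k}^2B_{k,p}B_{k,q}.
\]
By $AB=I$, the first factor is $1$ and the second is $(AB)_{p,q}=0$ because $p\ne q$. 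Since signs are irrelevant in characteristic two, what remains is $\Delta\sum_k A_{p,k}^2B_{k,p}B_{k,q}$. Multiplying by $t$ gives~\eqref{eq:row-identity}.

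The main point needing care is the pairing step. I must check that the pairing is a fixed-point-free involution that reverses parity; it is, because $p\ne q$ and it composes with a transposition. I must also check that choosing representatives with $\pi(p)<\pi(q)$ yields exactly the stated sum over $k<l$ together with an unrestricted permanent of the complement. The remaining algebra is routine.
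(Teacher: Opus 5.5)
Your proof is correct and follows the paper's argument step for step. The same parity-reversing exchange of the columns used by rows $p$ and $q$ gives $\sum_{k<l}A_{p,k}A_{p,l}\det A[-\{p,q\},-\{k,l\}]$, and Jacobi's identity together with $(AB)_{p,p}(AB)_{p,q}=1\cdot 0$ finishes it. The only cosmetic difference is that you factor out $t$ first by row linearity; like the paper's version, this uses no division, so it remains valid when $t$ is a zero divisor.
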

\begin{proof}
Fix the unordered pair $\{j,k\}$ of columns used by rows $p,q$
and a bijection on the remaining indices. Exchanging the assignments
of the two rows pairs one odd and one even permutation with equal
monomial weight. Exactly one contributes, so, writing $a_j=A_{p,j}$,
\[
 \ph(N)=t\sum_{j<k}a_ja_k\det A[-\{p,q\},-\{j,k\}].
\]
Apply~\eqref{eq:jacobi}, and put $x_j=B_{j,p}$, $y_j=B_{j,q}$.
The sum multiplying $t\Delta$ is
\[
 \sum_{j<k}a_ja_k(x_jy_k+x_ky_j)
 =\left(\sum_j a_jx_j\right)\left(\sum_j a_jy_j\right)
       +\sum_j a_j^2x_jy_j.
\]
The first product is $(AB)_{p,p}(AB)_{p,q}=1\cdot0$.
No division by $t$ is used, so the argument also covers zero divisors.
\end{proof}

Fix row $q$. Multilinearity gives a column $g$ such that replacing
that row by an arbitrary row $a$ yields
\begin{equation}\label{eq:row-linear}
 \ph(A[q\leftarrow a])=ag,\qquad g=\grad(A)_{*,q}.
\end{equation}
Write $F=\ph(A)$ and define a column $h$ by
\begin{equation}\label{eq:single-row}
 h_q=F,\qquad
 h_p=\Delta\sum_j A_{p,j}^2B_{j,p}B_{j,q}\quad(p\ne q),
 \qquad g=Bh.
\end{equation}
Indeed, $h_p$ is the value obtained by replacing row $q$ with row $p$,
and $a=(aB)A$ expresses any replacement as a linear combination of
the existing rows. Thus a specified row of coefficients costs only
$O(n^2)$ operations when $A,B,\Delta,F$ are already available.

\begin{proposition}[Full gradient formula]\label{prop:gradient}
For invertible $A$, let $C_{p,j}=A_{p,j}^2B_{j,p}$. Then
\begin{equation}\label{eq:gradient}
 \boxed{\grad(A)=\Delta BCB+(F+\Delta)B.}
\end{equation}
Once $F,B,\Delta$ are known, this costs two matrix multiplications
and $O(n^2)$ additional operations.
\end{proposition}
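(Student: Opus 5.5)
The plan is to assemble the single-column formula~\eqref{eq:single-row} for all $q$ simultaneously into one matrix identity. Fix $q$. By~\eqref{eq:row-linear} and~\eqref{eq:single-row}, the $q$th column of $\grad(A)$ is $Bh^{(q)}$, where $h^{(q)}_q=F$ and, for $p\ne q$,
\[
 h^{(q)}_p=\Delta\sum_j A_{p,j}^2B_{j,p}B_{j,q}=\Delta\sum_j C_{p,j}B_{j,q}=\Delta(CB)_{p,q}.
\]
So I will define the matrix $H$ whose $q$th column is $h^{(q)}$. Off the diagonal, $H$ agrees with $\Delta CB$, and $H_{q,q}=F$. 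Writing $H=\Delta CB+D$ with $D$ diagonal and $D_{q,q}=F+\Delta(CB)_{q,q}$, we obtain $\grad(A)=BH=\Delta BCB+BD$.

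The key step, and the only real point, is to evaluate the diagonal correction. The diagonal entry is
\[
 (CB)_{q,q}=\sum_j A_{q,j}^2B_{j,q}^2 .
\]
In characteristic two, squaring is additive (Frobenius), so this equals $\bigl(\sum_j A_{q,j}B_{j,q}\bigr)^2=(AB)_{q,q}^2=1$. This holds over any commutative ring of characteristic two, not just a field. Hence $D=(F+\Delta)I$ and $BD=(F+\Delta)B$, which is exactly~\eqref{eq:gradient}. I would also check that the row-replacement argument behind~\eqref{eq:single-row} is valid. Any row $a$ equals $(aB)A$, so by multilinearity $\ph(A[q\leftarrow a])=\sum_p (aB)_p h^{(q)}_p=a\,Bh^{(q)}$. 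Here the $p=q$ term is $\ph(A)=F$, and the $p\ne q$ terms come from Lemma~\ref{lem:row} with $t=1$.

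For the cost, $C$ is formed entrywise in $O(n^2)$ operations from $A$ and $B$. The products $BC$ and then $(BC)B$ are the two matrix multiplications. Scaling by $\Delta$ and adding $(F+\Delta)B$ take $O(n^2)$ further operations. I expect no genuine obstacle. The one step needing care is recognising that the diagonal of $CB$ collapses to $1$ via the Frobenius identity, since this is what turns the $q$-dependent correction into the uniform scalar $F+\Delta$.
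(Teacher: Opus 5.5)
Your proposal is correct and is essentially the paper's proof: you collect the replacement vectors from \eqref{eq:single-row} into the matrix $\Delta CB+(F+\Delta)I$, using the Frobenius identity $(CB)_{q,q}=(AB)_{q,q}^2=1$, and then multiply by $B$ on the left. Your extra check of the replacement step via $a=(aB)A$ and your cost count (entrywise formation of $C$, two products, $O(n^2)$ further work) also match the paper's argument.
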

\begin{proof}
The diagonal of $CB$ is identically one, since
\[
 (CB)_{p,p}=\sum_j(A_{p,j}B_{j,p})^2
           =\left(\sum_j A_{p,j}B_{j,p}\right)^2=1.
\]
The matrix whose $q$th column is the vector $h$ in
\eqref{eq:single-row} is consequently
$\Delta CB+(F+\Delta)I$. Multiplication by $B$ on the left gives
all replacement coefficient columns at once.
\end{proof}

\subsection{Why corank three matters}

Scalar vanishing does not imply derivative vanishing. If $J_s$ is
the all-one matrix of order $s$, then
\begin{equation}\label{eq:J-boundary}
 \ph(J_3)=1,\qquad \ph(J_4)=0,\qquad
 \frac{\partial\ph}{\partial A_{1,1}}(J_4)=1.
\end{equation}
The last derivative fixes index $1$ and leaves $\ph(J_3)$. Thus
the corank-three case cannot be dropped.

\begin{proof}[Proof of Theorem~\ref{thm:main-algebra}]
For invertible inputs combine Section~\ref{sec:fast} and
Proposition~\ref{prop:gradient}. If $k=n-\rank A\ge4$, any one-row
replacement has rank at most $\rank A+1\le n-3$, so its $\ph$-value
vanishes by~\eqref{eq:rank-zero}. In particular, replacing the row
by each standard basis row shows that all its coefficients are zero.
The value and gradient are therefore both zero.

For $1\le k\le3$, fix $U,V$ from a rank-size submatrix at $A$.
Differentiate the polynomial identity~\eqref{eq:boolean} in the
variables of $X$ and then set $X=A$. This yields
\begin{equation}\label{eq:gradient-border}
 \grad(A)=\sum_{\varepsilon\in\{0,1\}^k}
       \grad(M_A(\varepsilon))[[n],[n]].
\end{equation}
All at most eight bordered matrices are invertible by
Lemma~\ref{lem:border}, so their values and gradients can be
computed in $O(n^\tau)$ total operations and summed. The rank test
itself is not differentiated: the identity holds for every $X$
with the chosen border fixed. This distinction justifies the
derivatives even at a rank boundary. The trivial orders are handled
directly, and successive evaluations give the stated space bound.
\end{proof}

\section{Shortest Even Cycles and Their Arc Support}\label{sec:graphs}

\subsection{The minimum coefficient}

Let $G=(V,E)$ be a simple directed graph without input loops, with
$|V|=n$ and $|E|=m$. Oppositely directed arcs are allowed.
Associate an independent variable $w_e$ with each arc, let $W(w)$
be the zero-diagonal weighted adjacency matrix, and define
\begin{equation}\label{eq:Q}
 Q(z,w)=\ph(I+zW(w)),\qquad P_r(w)=[z^r]Q(z,w).
\end{equation}
The identity entries are artificial loops of weight one. A term of
$Q$ has $z$-degree equal to the number of moved vertices in its
permutation. In particular, $\deg_z Q\le n$.

\begin{lemma}[Minimum coefficient and cycle polynomial]\label{lem:valuation}
If the shortest even directed-cycle length is $\ell$, then $P_r=0$
for $r<\ell$ and
\begin{equation}\label{eq:minimum}
 P_\ell(w)=\sum_{\substack{C\text{ simple directed cycle}\\|C|=\ell}}
                       \prod_{e\in C}w_e\ne0.
\end{equation}
Cycles here are distinguished by arc sets, not by their starting
vertices. If $G$ has no even cycle, then $Q=0$ identically.
\end{lemma}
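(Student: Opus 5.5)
We expand $Q(z,w)=\ph(I+zW)$ directly from the definition~\eqref{eq:definition}. A permutation $\pi$ contributes $\prod_i (I+zW)_{i,\pi(i)}$. Since $W$ has zero diagonal, a fixed point contributes the factor $1$. A moved index $i$ contributes $zW_{i,\pi(i)}$, which is nonzero only if $(i,\pi(i))$ is an arc. The nonvanishing terms therefore correspond exactly to permutations whose nontrivial cycles are vertex-disjoint simple directed cycles of $G$ (a cycle of length two uses a pair of opposite arcs). Such a term equals $z^{r}\prod_{e}w_e$, where $r$ is the number of moved vertices and the product runs over the arcs of those cycles.

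Next we use the parity characterization from Section~\ref{sec:prelim}. The permutation is odd iff it contains an odd number of even-length cycles, and fixed points are odd cycles. Hence every odd $\pi$ that contributes has at least one even directed cycle of $G$ among its cycles. Its $z$-degree is $r\ge$ the length of that cycle, so $r\ge\ell$. This gives $P_r=0$ for $r<\ell$. It also gives $Q=0$ when $G$ has no even cycle, since then no odd permutation has a nonzero term.

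For $r=\ell$, an odd contributing $\pi$ moves exactly $\ell$ vertices and contains an even cycle of length at least $\ell$. That cycle must therefore be its only nontrivial cycle, of length exactly $\ell$. Conversely, each simple directed cycle $C$ of length $\ell$ determines exactly one permutation: $\pi$ sends each vertex of $C$ to its successor along $C$ and fixes everything else. This $\pi$ is odd, since it is a single even cycle plus fixed points, and its monomial is $z^\ell\prod_{e\in C}w_e$. Distinct arc sets give distinct permutations and distinct monomials, because the permutation is recovered from its arc set. The only subtle point is the bookkeeping that cycles are counted by arc set rather than by starting vertex; here a cyclic permutation is exactly an arc set, so no multiplicity arises. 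Summing over these permutations gives~\eqref{eq:minimum}.

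Finally, $P_\ell\ne0$. The monomials $\prod_{e\in C}w_e$ are multilinear in independent variables and pairwise distinct, each with coefficient $1$ in characteristic two. At least one cycle of length $\ell$ exists by the definition of $\ell$, so no cancellation can occur. The only step needing care is the degree-counting argument excluding extra cycles at degree $\ell$, and it follows immediately from the lower bound on $r$ established above.
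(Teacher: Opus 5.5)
Your proposal is correct and follows essentially the same route as the paper. Oddness forces an even cycle of $G$ among the permutation's cycles, which gives $r\ge\ell$; at $r=\ell$ that cycle exhausts all moved vertices; and distinct arc sets give distinct squarefree monomials with coefficient one, so nothing cancels. You also spell out the expansion of $I+zW$ and the case of opposite-arc two-cycles, which the paper leaves implicit.
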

\begin{proof}
An odd permutation contains an odd number of even cycles, and hence
contains at least one. Its moved-vertex count is therefore at least
$\ell$. At count exactly $\ell$, that even cycle uses every moved
vertex, so no other nontrivial cycle is possible. Conversely, an
$\ell$-cycle and fixed points elsewhere form an odd permutation.
Different arc sets determine different permutation monomials, so
these squarefree monomials cannot cancel as polynomials. If no even
cycle exists, no supported odd permutation exists.
\end{proof}

\subsection{Fixed-point interpolation for the length}

For $n\le1$ return no even cycle. Otherwise take
$\F=\F_{2^d}$, $d=4\lceil\log_2 n\rceil$, and fix any $n+1$
distinct points $\Gamma\subseteq\F$. Independently assign a uniform
weight $\beta_e\in\F$ to each arc. At every $\gamma\in\Gamma$,
evaluate $\ph(I+\gamma W(\beta))$ by Proposition~\ref{prop:scalar},
and interpolate $q(z)=Q(z,\beta)$. Return the least positive even
$r$ with $[z^r]q\ne0$, or $+\infty$ if there is none.
Singular evaluation points are valid inputs; no screening is needed.

\begin{proposition}[Length guarantee]\label{prop:length}
This algorithm takes $O(n^{\tau+1})$ field operations and
$\softO(n^{\tau+1})$ bit operations. If the true shortest length is
$\ell$, its answer is $\ell$ with probability at least
$1-\ell/|\F|\ge1-n^{-3}$. It never underestimates $\ell$, with
$+\infty$ interpreted as no detection, and always returns $+\infty$
on an even-cycle-free input.
\end{proposition}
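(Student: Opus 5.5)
The argument splits into three parts: running time, correctness of the answer given the interpolated polynomial, and the probability bound.

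\emph{Running time.} The algorithm makes $n+1$ calls to Proposition~\ref{prop:scalar}, each on a matrix of order $n$, for $O(n\cdot n^\tau)=O(n^{\tau+1})$ field operations. Interpolating $q$ from $n+1$ values costs $O(n^2)$, as explained in the preliminaries. This is valid because $\deg_z Q\le n$, so $q(z)=Q(z,\beta)$ has degree at most $n$ and is determined by its values on $\Gamma$. The field $\F_{2^d}$ with $d=O(\log n)$ is built deterministically in polynomial time in $d$, which is $\softO(1)$ overhead, and each field operation costs $\softO(d)$ bit operations. Together these give $\softO(n^{\tau+1})$ bit operations. Since $|\F|=2^d\ge n^4>n$, the $n+1$ distinct points exist.

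\emph{Correctness given $q$.} The interpolation is exact, so $[z^r]q=P_r(\beta)$ for every $r$. By Lemma~\ref{lem:valuation}, $P_r=0$ identically for $r<\ell$. The algorithm therefore never sees a nonzero coefficient below $\ell$ and cannot underestimate $\ell$. On an even-cycle-free graph, $Q=0$ identically, so every coefficient vanishes and the output is always $+\infty$. When $\ell<\infty$, the answer is exactly $\ell$ if and only if $P_\ell(\beta)\ne0$. The value $\ell$ is a positive even integer, so the search over least positive even $r$ is consistent with it.

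\emph{Probability.} By Lemma~\ref{lem:valuation}, $P_\ell$ is a nonzero polynomial in the arc variables. It is a sum of squarefree monomials of total degree exactly $\ell$, and these cannot cancel. The weights $\beta_e$ are independent and uniform in $\F$, so Schwartz--Zippel gives $\Pr[P_\ell(\beta)=0]\le \ell/|\F|$. Finally, $\ell\le n$ and $|\F|=2^{4\lceil\log_2 n\rceil}\ge n^4$, so $\ell/|\F|\le n^{-3}$.

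No step is a real obstacle. The only delicate point is that singular evaluation points $I+\gamma W(\beta)$ are allowed. This is covered because Proposition~\ref{prop:scalar} holds at every rank with no nonsingularity or field-size assumption, so no $\gamma$ has to be screened and the interpolation uses exact values.
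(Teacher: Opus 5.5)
Your proposal is correct and follows the paper's proof essentially step for step. Both rely on exactness of every evaluation (Proposition~\ref{prop:scalar} at every rank) and of the interpolation, on the identically vanishing lower coefficients and $Q=0$ from Lemma~\ref{lem:valuation}, on Schwartz--Zippel applied to $P_\ell$, and on the count of $n+1$ evaluations plus $O(n^2)$ interpolation; you additionally make explicit the bound $\ell/|\F|\le n/n^4$ and the existence of $n+1$ distinct points, which the paper leaves implicit.
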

\begin{proof}
The only error needed to exclude is $P_\ell(\beta)=0$.
By~\eqref{eq:minimum}, $P_\ell$ is nonzero of total degree $\ell$,
so its cancellation probability is at most $\ell/|\F|$.
All lower coefficients are identically zero, and every matrix
evaluation and the interpolation are exact. If $Q=0$, every value
is zero. There are $n+1$ evaluations of cost $O(n^\tau)$ and
$O(n^2)$ interpolation operations. The field degree is
$O(\log n)$, giving the bit bound.
\end{proof}

One need not apply a union bound to the interpolation points:
the randomized object is the single coefficient polynomial before
specialization. Independent repetition $h$ times and taking the
smallest answer reduces the error bound to $(\ell/|\F|)^h$.
Choosing $\tau=\log_2 7$ already gives an explicit exponent
$1+\log_2 7<4$, without relying on a numerical record for $\omega$.

\begin{corollary}[Existence testing]\label{cor:existence}
Even-cycle existence can be tested in $\softO(n^\tau)$ bit operations
with no false positives and false-negative probability at most
$n/|\F|$, using a binary extension field of logarithmic degree.
\end{corollary}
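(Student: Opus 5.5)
To prove Corollary~\ref{cor:existence}, I would avoid interpolation and use a single evaluation. Observe that $Q(z,w)=\ph(I+zW(w))$ is a polynomial in $z$ and the arc variables of total degree at most $n$: each monomial comes from a permutation whose moved-vertex count equals both its $z$-degree and its degree in $w$. By Lemma~\ref{lem:valuation}, $Q$ is identically zero exactly when $G$ has no even cycle. When an even cycle exists, $Q$ has the nonzero coefficient $P_\ell$, so it is a nonzero polynomial in the joint variables $(z,w)$. The test is then: draw $\beta_e$ uniformly and independently from $\F=\F_{2^d}$ with $d=O(\log n)$, and also draw one uniform $\zeta\in\F$. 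Evaluate $\ph(I+\zeta W(\beta))$ by Proposition~\ref{prop:scalar}, and declare that an even cycle exists precisely when the value is nonzero.

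Correctness then splits into the two error directions. For no false positives: on an even-cycle-free graph $Q=0$ as a polynomial, so every specialization is zero and the evaluation is exact. For false negatives: when an even cycle exists, $Q$ is a nonzero polynomial of total degree at most $n$ in the variables $(z,w)$. Since $\zeta$ and the $\beta_e$ are independent uniform field elements, the Schwartz--Zippel bound gives $\Pr[Q(\zeta,\beta)=0]\le n/|\F|$. The step that needs care is treating $z$ as one more random variable rather than a fixed point. With $z=1$ alone, the coefficients $P_r(\beta)$ for different $r$ could cancel. Randomizing $z$ as well removes this issue, because Schwartz--Zippel applies to the whole polynomial $Q$.

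For the cost, there is one call to Proposition~\ref{prop:scalar} on an $n\times n$ matrix, costing $O(n^\tau)$ field operations. Forming $I+\zeta W(\beta)$ takes $O(n^2)$ further operations. Each field operation costs $\softO(\log n)$ bit operations, and constructing the field deterministically takes time polynomial in $d$, which is absorbed. The total is $\softO(n^\tau)$ bit operations. Taking $d=4\lceil\log_2 n\rceil$ as before makes the failure probability at most $n^{-3}$.
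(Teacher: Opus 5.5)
Your test, with $z$ drawn at random as well as the weights, does not achieve the claimed bound $n/|\F|$. The error is in the degree count, at exactly the step you flagged as needing care.

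A supported permutation with $r$ moved vertices contributes the monomial $z^r\prod_{e\in\pi}w_e$. This has $z$-degree $r$ \emph{and} $w$-degree $r$, so its total degree in the joint variables $(z,w)$ is $2r$, not $r$. Hence $Q$ has joint total degree up to $2n$, and Schwartz--Zippel only gives $\Pr[Q(\zeta,\beta)=0]\le 2n/|\F|$.

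The claimed bound is in fact false for your test. For the single directed $2$-cycle, $Q=z^2w_{12}w_{21}$, so your test fails with probability $1-(1-1/|\F|)^3$. This exceeds $2/|\F|=n/|\F|$ as soon as $|\F|\ge4$. In general, randomizing $z$ adds the failure event $\zeta=0$, since $Q(0,w)=\ph(I)=0$ for $n\ge2$. Conditioning on $\zeta\ne0$, the scaled weights $\zeta\beta$ are again uniform, which gives the correct bound $(n+1)/|\F|$ for your test.

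The premise that motivated the random $z$ is also mistaken. Setting $z=1$ loses nothing that Schwartz--Zippel cannot handle. $Q(1,w)=\sum_r P_r(w)$ is already a nonzero polynomial in $w$ alone. Distinct supported odd permutations have distinct arc sets, hence distinct squarefree $w$-monomials. The $z$-exponent equals the degree of the $w$-monomial, so no two terms merge when $z=1$. A cancellation among the values $P_r(\beta)$ is just one way for this nonzero polynomial of total degree at most $n$ to vanish at $\beta$, and is already covered by the bound $n/|\F|$.

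This is the paper's proof: evaluate $\ph(I+W(\beta))$ once and apply Schwartz--Zippel in the arc variables only. Your no-false-positive argument and your cost analysis are fine. They carry over unchanged once you fix $\zeta=1$, or any fixed nonzero field element.
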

\begin{proof}
Evaluate $\ph(I+W(\beta))$ once. If an even cycle exists, this is
the specialization of a nonzero polynomial: distinct supported
odd permutations still give distinct monomials after setting $z=1$.
Its total degree is at most $n$. Apply Schwartz--Zippel. This is a
randomized recognition bound; the cited structural recognition
algorithms have deterministic guarantees.
\end{proof}

\subsection{All shortest-cycle arcs by one gradient pass}

For an original arc $e=(u,v)$ define
\begin{equation}\label{eq:arc-poly}
 R_e(z,w)=zw_e\grad(I+zW(w))_{v,u}.
\end{equation}
Multilinearity makes this exactly the sum of terms of $Q$ that
use $e$. It has $z$-degree at most $n$, and
\begin{equation}\label{eq:arc-minimum}
 [z^\ell]R_e(z,w)=
 \sum_{\substack{C:\,|C|=\ell\\e\in C}}\prod_{f\in C}w_f.
\end{equation}
This polynomial is nonzero precisely when $e$ belongs to a shortest
even cycle.

Use the same random assignment as for the length, now over a field
of size at least $n^6$ and degree $O(\log n)$. Once a candidate
degree $r$ is known, form $\lambda_\gamma=[z^r]L_\gamma(z)$.
Make a second pass through the fixed points, computing the full
gradient at each point and accumulating, for each input arc,
\begin{equation}\label{eq:arc-stream}
 E_{u,v}=\sum_{\gamma\in\Gamma}
  \lambda_\gamma\gamma\beta_{u,v}
       \grad(I+\gamma W(\beta))_{v,u}.
\end{equation}
Return the arcs with $E_{u,v}\ne0$.

\begin{proof}[Proof of Theorem~\ref{thm:main-graph}]
Each pass uses $O(n)$ evaluations of cost $O(n^\tau)$; forming
and accumulating all $m\le n(n-1)$ arc contributions takes
$O(n^3)$ operations, within $O(n^{\tau+1})$.
For the true $\ell$, apply Schwartz--Zippel jointly to $P_\ell$
and every nonzero polynomial in~\eqref{eq:arc-minimum}. Their
total degree is $\ell$, so a union bound gives
\begin{equation}\label{eq:support-error}
 \Pr[\text{wrong length or inexact shortest-cycle support}]
 \le\frac{(m+1)\ell}{|\F|}=O(n^{-3}).
\end{equation}
On the complementary event the length is correct and exactly the
desired arc polynomials survive. Arcs with identically zero target
coefficient are always excluded when $r=\ell$. Only the current
matrix state and an $n\times n$ accumulator need be stored, so
processing points successively gives quadratic space.
\end{proof}

\begin{corollary}[Unique shortest cycle]\label{cor:unique}
Under the promise that the input has no even cycle or has a unique
shortest even cycle, one can deterministically output that cycle,
or report its absence, in $\softO(n^{\tau+1})$ bit operations and
$O(n^2)$ field elements of space.
\end{corollary}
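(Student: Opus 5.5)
Under the promise, I will rerun the length and support algorithm of Theorem~\ref{thm:main-graph} with every arc weight set to $\beta_e=1$, over a fixed field $\F$ that has at least $n+1$ elements, say $\F=\F_{2^d}$ with $d=\lceil\log_2(n+1)\rceil$. This field is constructed deterministically as in Section~\ref{sec:prelim}. Nothing random remains, so the task is to show that the two Schwartz--Zippel steps are unnecessary under the promise.

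\textbf{Length.} By Lemma~\ref{lem:valuation}, $P_r=0$ for $r<\ell$, and $P_\ell$ is the sum of $\prod_{e\in C}w_e$ over shortest even cycles $C$. Under the uniqueness promise this is a single monomial. Its value at the all-ones weight is $1\ne0$. The even-cycle-free case gives $Q=0$ identically. Consequently the interpolated $q(z)=Q(z,\one)$ has least nonzero positive even coefficient exactly at $r=\ell$, or it has none. Proposition~\ref{prop:scalar} is exact at every rank, including singular evaluation points, and interpolation at $n+1$ distinct fixed points is exact. The length is therefore returned deterministically in $O(n^{\tau+1})$ field operations. Each field operation costs $\softO(\log n)$ bit operations.

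\textbf{Support.} I run the second gradient pass~\eqref{eq:arc-stream} with $r=\ell$. By~\eqref{eq:arc-minimum}, $E_{u,v}$ equals $[z^\ell]R_e(z,\one)$. Under uniqueness, this is $1$ if $e$ lies on the unique shortest cycle $C^*$ and $0$ otherwise, since the sum in~\eqref{eq:arc-minimum} then has at most one term. So the returned arc set is exactly $E(C^*)$. I then recover the cycle order by following successor arcs from any vertex of $C^*$. Each vertex of $C^*$ has exactly one outgoing returned arc, so this takes $O(n)$ time. Time and space bounds are inherited from the proof of Theorem~\ref{thm:main-graph}: two passes of $n+1$ evaluations and gradients costing $O(n^\tau)$ each, plus an $n\times n$ accumulator.

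\textbf{Main point to check.} The step I would verify most carefully is that the characteristic-two evaluation of $P_\ell$ at all-ones weights cannot collapse. Under the promise this is immediate, because a single monomial evaluates to $1$. Without uniqueness the count of shortest cycles could be even, which is exactly why the promise is needed. I would also confirm that $\F$ needs no size beyond $n+1$, since no probabilistic bound is invoked.
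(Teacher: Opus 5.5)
Your proposal is correct and follows the paper's proof: all-ones weights over a deterministically constructed binary field with at least $n+1$ elements, $P_\ell(\one)=1$ and $[z^\ell]R_e(z,\one)$ equal to $1$ on the arcs of the unique shortest cycle and $0$ elsewhere, two exact interpolation passes, successor-following to output the cycle, and $Q\equiv0$ for the cycle-free case. Your explicit check that $E_{u,v}=[z^\ell]R_e(z,\one)$ by interpolation at $n+1$ points only spells out a step the paper leaves implicit.
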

\begin{proof}
Use weights $\beta_e=1$ and a binary extension field with at least
$n+1$ elements and logarithmic degree. If the unique shortest cycle
is $C_*$, then $P_\ell(\one)=1$, while
$[z^\ell]R_e(z,\one)$ is one for $e\in C_*$ and zero otherwise.
The two interpolation passes recover exactly these arcs. Following
their unique successors outputs the cycle. If no even cycle exists,
the length pass returns zero identically.
\end{proof}
This algorithm does not certify the uniqueness promise on unrestricted
inputs.

\subsection{Vertex deletion and the limits of coefficient information}

For a current subgraph $H$, forcing a vertex $v$ to use its artificial
loop leaves an odd permutation on the remaining vertices. Consequently
\begin{equation}\label{eq:vertex-poly}
 \grad(I+zW_H)_{v,v}=\ph(I+zW_{H-v}).
\end{equation}
Thus the same gradient pass, accumulating
$\lambda_\gamma\grad(I+\gamma W_H)_{v,v}$, returns all
$[z^r]\ph(I+zW_{H-v})$ in $O(n^{\tau+1})$ operations and
quadratic space. After a deletion the current graph changes, so
this batch must be recomputed before the next deletion.

The scope of these coefficient statements matters. The union of
shortest-cycle arcs can contain other cycles: all six arcs of a
bidirected triangle belong to shortest two-cycles, and their union
also contains three-cycles. Also $Q$ need not have only even
degrees: a disjoint two-cycle and three-cycle with weights one
give $Q=z^2+z^5$.

Finally, on the allowed cancellation event a returned degree need
not be any actual cycle length. For example, take only the arcs of
\[
 (1\to2\to1),\quad(3\to4\to3),\quad
 (3\to5\to6\to3),\quad(4\to7\to8\to4).
\]
With all weights one, $Q=z^8$, although every simple cycle has
length two or three. The length algorithm has one-sided error as
an existence test and never underestimates the optimum; it does
not certify each finite answer. The witness algorithm below
always produces an actual even cycle when it produces a cycle.

\section{Extracting a Cycle by Exact Row Maintenance}\label{sec:witness}

\subsection{Computing in a truncated series ring}

For a precision $L\ge1$, work in
\[
 \mathcal R_L=\F[z]/(z^{L+1}).
\]
The matrix $A=I+zW$ is invertible because its constant term is $I$.
All identities for invertible matrices in Sections~\ref{sec:fast}
and~\ref{sec:gradient} hold in this ring. Only the extension to
singular inputs used ranks over a field, and it is unnecessary here.

\begin{lemma}[Truncated evaluation]\label{lem:truncated}
One can compute $\ph(I+zW)\bmod z^{L+1}$, together with the inverse
and determinant needed for its evaluation, in
$\softO(n^\tau L)$ base-field operations and $O(n^2L)$ base-field
elements of space.
\end{lemma}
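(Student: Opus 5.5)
We run the invertible-case algorithm of Section~\ref{sec:fast} verbatim over the commutative ring $\mathcal R_L$. Every step there is a ring computation once $B=A^{-1}$ and $\Delta=\det A$ are available. This includes the $O(n^2)$ prefix/suffix evaluation of $S$, the block traces of Lemma~\ref{lem:block} evaluated as in Lemma~\ref{lem:trace-cost}, and the final product $\Delta(S+T)$. The identities \eqref{eq:inversions}, \eqref{eq:jacobi}, \eqref{eq:ST} hold over any commutative ring of characteristic two in which $A$ is invertible. So the output is exactly $\ph(I+zW)\bmod z^{L+1}$. Padding by $\oplus I$ is also harmless over $\mathcal R_L$.

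\textbf{Arithmetic of $\mathcal R_L$.} An element of $\mathcal R_L$ is stored as $L+1$ coefficients. Addition costs $O(L)$ and multiplication costs $\softO(L)$ base-field operations by fast polynomial multiplication. For matrix products we use the admissible bilinear multiplication algorithm assumed in the introduction, which applies over the algebra $\mathcal R_L$. A product of $s\times s$ matrices then costs $O(s^\tau)$ ring operations. Since bilinear algorithms perform their ring multiplications only between linear combinations of entries, this is $O(s^\tau)$ multiplications and additions in $\mathcal R_L$, that is, $\softO(s^\tau L)$ base operations. Alternatively, one can view an $\mathcal R_L$-matrix as a polynomial matrix of degree $L$ and multiply by evaluation and interpolation, or by Kronecker substitution, within $\softO(s^\tau L)$. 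The total over all block traces is therefore $\softO(N^\tau L)$, by the summation \eqref{eq:scale-sum}.

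\textbf{Inverse and determinant.} This is the step needing care, because generic fast inversion needs pivots that are units. Here $A=I+zW$ has constant term $I$. Two routes are available.

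\emph{Newton iteration.} Start from $B_0=I$ and set $B_{i+1}=B_i(2I-AB_i)$; in characteristic two this is $B_{i+1}=B_iAB_i$, with precision doubling at each step. After $O(\log L)$ steps we have $B\bmod z^{L+1}$. Each step costs $O(1)$ matrix products at the current precision, so the total is $\softO(n^\tau L)$.

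\emph{Unit pivots.} Alternatively, run the recursive Bunch--Hopcroft/Strassen inversion directly. Every leading principal submatrix of $I+zW$ again has constant term $I$, so all pivots and Schur complements are units of $\mathcal R_L$. The recursion therefore needs no pivoting and produces $\Delta$ as the product of the Schur-complement determinants.

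With the Newton route, $\Delta$ must be obtained separately. We would take the unit-pivot recursion as the primary method, since it gives $B$ and $\Delta$ together with $O(n^\tau)$ ring operations. Scalar inverses in $\mathcal R_L$ also cost only $\softO(L)$ by series Newton.

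\textbf{Space.} Each matrix over $\mathcal R_L$ occupies $O(n^2L)$ base elements. The quadratic-workspace processing of node pairs from Lemma~\ref{lem:trace-cost} keeps only $O(1)$ such matrices at a time, and the inverse recursion also uses $O(n^2)$ ring elements. The main obstacle is justifying that the matrix-multiplication-based inversion, with its determinant, carries over to $\mathcal R_L$ without division by non-units. The unit-constant-term observation settles this.
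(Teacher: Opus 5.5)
Your proposal is correct and follows essentially the same route as the paper. It uses block-recursive inversion over $\mathcal R_L$, where the leading block and its Schur complement keep identity constant term, so only units are inverted and $\Delta=\det P\det H$ comes with the inverse; the chosen bilinear multiplication algorithm is extended to $\mathcal R_L$ for both the recursion and the block traces, and each ring operation or scalar unit inversion costs $\softO(L)$ through fast characteristic-two polynomial multiplication and Newton series inversion. Your matrix Newton iteration $B_{i+1}=B_iAB_i$ is valid, but as you note it does not yield $\Delta$, so the unit-pivot recursion you adopt as primary is the one that matches the paper's proof.
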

\begin{proof}
Use block-recursive inversion and determinant computation. If
$A=\left(\begin{smallmatrix}P&Q\\ R&S\end{smallmatrix}\right)$,
its Schur complement $H=S+RP^{-1}Q$ has constant term $I$, as does
$P$. Thus the recursion only inverts matrices with identity constant
term. Its formulas are
\begin{equation}\label{eq:block-inverse}
 \det A=\det P\det H,\qquad
 A^{-1}=\begin{pmatrix}
 P^{-1}+P^{-1}QH^{-1}RP^{-1}&P^{-1}QH^{-1}\\
 H^{-1}RP^{-1}&H^{-1}
 \end{pmatrix}.
\end{equation}
The recursion uses $O(n^\tau)$ ring additions and multiplications
and $O(n)$ scalar unit-series inversions. The block-trace evaluator
has the same $O(n^\tau)$ ring bound because the chosen bilinear
matrix multiplication algorithm extends to $\mathcal R_L$.

Fast polynomial multiplication over characteristic-two fields
costs $M(L)=\softO(L)$ base-field operations
\cite{CantorKaltofen1991,GathenGerhard2013}. This does not require
ordinary power-of-two Fourier roots in $\F$. Unit-series inversion
uses precision doubling; if $dg=1\bmod z^s$, then $g'=dg^2$
satisfies $dg'=1\bmod z^{2s}$ in characteristic two. The ring
operations and inversions therefore cost $\softO(L)$ each.
Sequential block computation gives the space bound.
\end{proof}

Fix one random arc assignment. Starting at $L=2$, double precision,
capping it at $n$, until a positive even coefficient is nonzero or
full precision is reached. On the event $P_\ell(\beta)\ne0$, the
first detected degree is $\ell$, the last precision is $O(\ell)$,
and the total cost is $\softO(n^\tau\ell)$. On other executions,
including even-cycle-free inputs, the unconditional cost is
$\softO(n^{\tau+1})$. Once a nonzero degree $r$ is found, subsequent
maintenance only needs precision $r$.

\subsection{A single-row update}

Maintain $A$, $B=A^{-1}$, $\Delta=\det A$, and $F=\ph(A)$ in
$\mathcal R_r$. Let $g=\grad(A)_{*,q}$ be computed by
\eqref{eq:single-row}. If $u$ is a row vector and
\[
 A'=A+e_qu,\qquad v=uB,\qquad b=Be_q,\qquad d=1+v_q,
\]
then row linearity, the determinant lemma, and the rank-one inverse
formula give
\begin{equation}\label{eq:dynamic}
 F'=F+ug,\qquad \Delta'=\Delta d,\qquad
 B'=B+d^{-1}bv.
\end{equation}
All right-hand sides use the old state, and the matrix itself is
updated by $A\leftarrow A+e_qu$. These operations, including the
specified-row coefficients, cost $O(n^2)$ ring operations plus
one unit inversion. We change only original arc entries and keep
the artificial loops fixed, so $u(0)=0$ and $d(0)=1$.
Every inverse is therefore legitimate throughout the process;
being merely nonzero would not suffice in a truncated ring.

\subsection{Preserving a nonzero coefficient}

Suppose $[z^r]F\ne0$. Process rows $q=1,\ldots,n$. In the current
state compute $g$ and
\begin{equation}\label{eq:row-partition}
 L_q=[z^r]g_q,\qquad
 E_{qj}=[z^r](A_{q,j}g_j)\quad(j\ne q),\qquad
 [z^r]F=L_q+\sum_{j\ne q}E_{qj}.
\end{equation}
If $L_q\ne0$, delete every original outgoing arc of $q$.
Otherwise choose any $j\ne q$ with $E_{qj}\ne0$ and retain only
that outgoing arc, together with the artificial loop. Such a $j$
exists by~\eqref{eq:row-partition}. Apply~\eqref{eq:dynamic} to
the chosen replacement.

\begin{lemma}[Exact coefficient preservation]\label{lem:preservation}
Each replacement preserves a nonzero coefficient of degree $r$.
The final graph has maximum original outdegree one and contains
an even cycle. If $r$ is the original shortest even-cycle length,
its shortest even cycle has length $r$.
\end{lemma}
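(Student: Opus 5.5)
The plan is to track the single coefficient $[z^r]\ph(I+zW)$, viewed as a polynomial identity in the current weights, through each row replacement. Then read off the structural consequences from the final matrix.

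\emph{Preservation.} In the current state, row $q$ of $A$ is $e_q+z\,(\text{original arcs of }q)$. By row linearity~\eqref{eq:row-linear}, $F=\sum_j A_{q,j}g_j$. The diagonal entry $A_{q,q}=1$ gives the term $g_q$, and each off-diagonal term is $A_{q,j}g_j$. Taking $[z^r]$ yields the partition~\eqref{eq:row-partition}.

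If we delete all original arcs of $q$, the new row is $e_q$. The new value is $F'=g_q$ exactly, because $g$ depends only on rows other than $q$. Hence $[z^r]F'=L_q\ne0$.

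If instead we keep only the arc $(q,j)$, the new value is $g_q+A_{q,j}g_j$. Its $z^r$ coefficient is $L_q+E_{qj}=E_{qj}\ne0$, since $L_q=0$. In both cases the update~\eqref{eq:dynamic} with the appropriate $u$ computes exactly this new state in $\mathcal R_r$. Truncation at $z^{r+1}$ does not affect the degree-$r$ coefficient. Invertibility is maintained because $u(0)=0$, which matters so that the later $g$ via~\eqref{eq:single-row} remains valid.

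\emph{Structure.} Rows are processed once each, and the modification of row $q$ never touches other rows. So at the end every vertex has at most one original outgoing arc. The final $[z^r]\ph(I+zW_{\text{final}})$ is nonzero, so some odd permutation is supported on the final graph with artificial loops. It contains an even cycle, which is a cycle of the final graph and hence of the input. It is simple, since with outdegree one the functional-graph cycles are vertex-disjoint.

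\emph{Length.} Suppose $r=\ell$ is the original shortest length. The final graph is a subgraph of the input, so each of its even cycles has length at least $\ell$. The supported term of degree $\ell$ moves exactly $\ell$ vertices and contains an even cycle. By the argument of Lemma~\ref{lem:valuation}, that cycle is the whole moved set, so it has length $\ell$.

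The main obstacle I expect is justifying that the algebraic update equals the combinatorial deletion or retention. The coefficient identity for $F'$ holds in $\mathcal R_r$ even though $g$ there was computed from $B,\Delta,F$ via the invertible-matrix formula. This needs the remark that Lemma~\ref{lem:row} and~\eqref{eq:single-row} hold over commutative rings for invertible $A$, so that the computed $g$ is the true coefficient column.
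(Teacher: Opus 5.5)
Your proposal is correct and follows the paper's proof. Row linearity makes the new degree-$r$ coefficient either $L_q$ or $L_q+E_{qj}=E_{qj}$, and a surviving term gives a supported odd permutation; when $r=\ell$, its even cycle must be a single $\ell$-cycle in a subgraph with no shorter even cycle. Your remarks on the exactness of $g$ in $\mathcal R_r$ and on invertibility via $u(0)=0$ spell out what the paper states just before the lemma.
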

\begin{proof}
In the first case the new coefficient is $L_q\ne0$. In the second
it is $L_q+E_{qj}=E_{qj}\ne0$. The condition $L_q=0$ in this case
is essential: a nonzero arc contribution alone could cancel the
retained loop contribution. Row linearity evaluates the replacement
exactly, so the invariant survives all rows.

At the end a nonzero term of the degree-$r$ coefficient gives a
supported odd permutation and hence an even cycle. The outdegree
bound allows all simple cycles to be found by following successors
in linear time. If $r=\ell$, deleting arcs cannot create a shorter
cycle, while a surviving odd permutation on $\ell$ moved vertices
must consist of a single $\ell$-cycle and fixed points. Thus one
of the extracted cycles has exactly the optimum length.
\end{proof}

\begin{proof}[Proof of Theorem~\ref{thm:main-witness}]
Use a logarithmic-degree field of size at least $n^4$ and the
precision-doubling procedure to find $r$ and retain its arc weights.
If no positive even coefficient is found, report no detection.
Otherwise initialize the maintained state at precision $r$ by
Lemma~\ref{lem:truncated}, perform the $n$ row replacements, and
output the shortest even cycle of the resulting functional graph.
It is a genuine cycle of the input on every execution that reaches
this step. No additional random specialization occurs during
maintenance. If $P_\ell(\beta)\ne0$, Lemma~\ref{lem:preservation}
proves optimality; this event has probability at least $1-n^{-3}$.

Initialization and length detection cost $\softO(n^\tau r)$
on a run detecting $r$, and $n$ row operations cost
$\softO(n^3r)$. Thus success with $r=\ell$ takes
$\softO(n^3\ell)$, while every run takes at most
$\softO(n^4)$. The direct series representation uses
$O(n^2r)\le O(n^3)$ field elements. If no even cycle exists,
$Q$ is identically zero and the first stage always reports absence.
\end{proof}

\section{Discussion}\label{sec:discussion}

The central algebraic result is independent of the graph encoding:
$\ph$ and its full first derivative can be evaluated at any matrix
over any characteristic-two field in $O(n^\tau)$ operations.
The interval decomposition supplies the fast aggregation, while
the border construction removes singularity without introducing
another interpolation variable or a field-size assumption.

For graphs, length and simultaneous shortest-cycle support take
$\softO(n^{\tau+1})$ time. A support computation does not coordinate
its arcs into one witness. Exact row maintenance provides that
coordination, with a general fourth-degree bound and the stronger
parameter bounds in Appendix~\ref{app:parameter}. These bounds
leave a gap between general length computation and general witness
extraction. The unique-shortest-cycle promise removes the gap
deterministically, but testing that promise is not part of the result.

All fast bounds include the rank, inverse, interpolation, and
polynomial arithmetic costs. The scalar and gradient algorithms
can use quadratic workspace; direct truncated-series extraction
can require cubic workspace. Small-instance implementations with
classical elimination or naive polynomial convolution verify the
identities, but do not by themselves implement the stated fast
asymptotic bounds.

\appendix
\section{Complementary Bounds for Witness Extraction}\label{app:parameter}

This appendix gives the two extraction methods that complement exact
row maintenance, and explains how to choose among them without
assuming the optimum length is known. All random choices use a
binary extension field selected from the original vertex count
$n$, with $|\F|\ge n^6$ and extension degree $O(\log n)$.

\subsection{Short cycles by grouped arc deletions}\label{app:short}

Suppose the target length is the true $\ell$. Every subgraph
$H\subseteq G$ has no shorter even cycle, so
\begin{equation}\label{eq:subgraph-target}
 [z^\ell]\ph(I+zW_H(w))\ne0\ \text{as a polynomial}
 \quad\Longleftrightarrow\quad
 H\text{ contains an }\ell\text{-cycle}.
\end{equation}
Lemma~\ref{lem:truncated} implements a randomized query for this
property in $\softO(n^\tau\ell)$ field operations. Every query
uses fresh independent arc weights, has no false positives, and
has false-negative probability at most $\ell/|\F|$.

Build a balanced binary tree whose leaves are the $m$ original
arcs. For a visited block $S$, query whether the current graph
$H\setminus S$ contains an $\ell$-cycle. If yes, delete the whole
block. If no and $S$ is not a singleton, recursively process its
two children in order. Retain a singleton whose deletion fails.
This is a containment-query extraction procedure of the type
studied in~\cite{BKK2014}.

\begin{lemma}[Query bound]\label{lem:query-budget}
If all queries are correct, the final arc set is one $\ell$-cycle,
and the number of queries is at most
\begin{equation}\label{eq:query-budget}
 B(\ell)=1+2\ell(h+1),\qquad
 h=\lceil\log_2\max\{1,m\}\rceil.
\end{equation}
\end{lemma}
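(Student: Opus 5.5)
We argue in two parts. First we show the invariant that the current graph always contains an $\ell$-cycle, so that the output is a single cycle. Then we bound the number of queries by charging each one to a tree node that meets the surviving cycle. Throughout, ``contains an $\ell$-cycle'' is understood via~\eqref{eq:subgraph-target}, and every query is assumed correct.

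\emph{Correctness.} Initially $H=G$ contains an $\ell$-cycle. A block is deleted only when $H\setminus S$ still contains one, so the invariant that the current $H$ contains an $\ell$-cycle is preserved. When the recursion ends, every original arc $e$ has been either deleted or retained as a singleton whose deletion failed. Take a retained $e$. At the moment $\{e\}$ was queried, $H\setminus\{e\}$ had no $\ell$-cycle. Arcs are only deleted later, never added, so the final graph $H_{\rm fin}\setminus\{e\}$ also has no $\ell$-cycle. But $H_{\rm fin}$ does contain an $\ell$-cycle $C$, and therefore $e\in C$. Since this holds for every retained arc, every arc of $H_{\rm fin}$ lies on every $\ell$-cycle of $H_{\rm fin}$, and in particular $H_{\rm fin}=C$. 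So the final arc set is exactly one $\ell$-cycle, of length $\ell$.

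\emph{Query count.} Fix the final cycle $C_{\rm fin}$, which has $\ell$ arcs. We count the root query as $1$ and then count each child query $2$ per expanded internal node. A node is expanded, meaning its children are queried, exactly when its query failed and it is not a singleton. Suppose a block $S$ is queried with $S\cap C_{\rm fin}=\emptyset$. Then $C_{\rm fin}\subseteq H\setminus S$, because all arcs of $C_{\rm fin}$ survive to the end. Hence the query succeeds and $S$ is deleted without expansion. Therefore every expanded node contains an arc of $C_{\rm fin}$, so it is an ancestor, or the node itself, of one of the $\ell$ leaves of $C_{\rm fin}$. The balanced tree on $m$ leaves has depth at most $h$, so each such leaf has at most $h+1$ ancestors including itself. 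Leaves are never expanded, so the number of expanded nodes is at most $\ell h\le\ell(h+1)$. This gives at most $1+2\ell(h+1)$ queries, which is $B(\ell)$; the slack also covers the degenerate case $m\le1$.

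\emph{Main obstacle.} The delicate step is the monotonicity argument in the correctness part. A query's negative answer at one time must remain valid for the final graph, and this uses only the fact that arcs are deleted and never added. A second point that needs care is where the charging argument uses the final cycle. It must use $C_{\rm fin}$, not an arbitrary $\ell$-cycle of the original graph, because only arcs that survive to the end guarantee a positive answer at every earlier time.
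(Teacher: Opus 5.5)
Your proof is correct and follows the paper's argument. Monotonicity of cycle containment shows that every retained arc lies on the final cycle, and every failed internal block must meet the final cycle $C_{\mathrm{fin}}$, since otherwise $C_{\mathrm{fin}}$ would have certified its deletion. The only difference is bookkeeping: you bound the failed blocks along the $\ell$ root-to-leaf paths of the final arcs, whereas the paper counts at most $\ell$ disjoint blocks per depth, and your count even gives the slightly sharper $1+2\ell h$.
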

\begin{proof}
Every deletion preserves a target cycle. A retained arc was
indispensable when queried and remains indispensable after further
deletions, by monotonicity of cycle containment. Choose any target
cycle in the final graph. An arc outside it would be dispensable,
so the final arc set is exactly that cycle.

Every internal block with a failed deletion contains a final arc:
otherwise the final cycle would already have certified the deletion
when it was queried. At each depth there are at most $\ell$ such
disjoint blocks. Every visited node other than the root is a child
of a failed internal block, giving~\eqref{eq:query-budget}.
\end{proof}

Consequently this method takes $\softO(n^\tau\ell^2)$ time and
$O(n^2\ell)$ field elements of space. Conditional on earlier correct
answers, the current subgraph is fixed before each new random query.
A union bound with the initial length computation gives error at most
\begin{equation}\label{eq:group-error}
 \frac{n+B(\ell)\ell}{|\F|}=O(n^{-3}).
\end{equation}
For a detected target $r$, enforce the budget $B(r)$ and return
failure if it is exceeded. Verify the final arc set by checking
that it has exactly $r$ arcs, every used vertex has indegree and
outdegree one, and a successor walk forms one cycle of even length
$r$. This check is linear in the residual arc count and $r$,
including on erroneous executions; it does not enumerate simple
cycles in an arbitrary residual graph.

\subsection{Cycles omitting few vertices}\label{app:long}

Again suppose the known target is the true $\ell$. If the current
graph $H$ has more than $\ell$ vertices and contains an
$\ell$-cycle, at least one vertex can be deleted while preserving
such a cycle. Reassign fresh independent arc weights and use
\eqref{eq:vertex-poly} to compute all deletion coefficients
\[
  D_v=[z^\ell]\ph(I+zW_{H-v})
\]
in one gradient and interpolation pass. Delete the smallest-index
vertex with $D_v\ne0$, or report failure if none exists.
Each selected deletion is valid under the correct-target assumption.

For the failure bound, fix any combinatorially deletable vertex
$v_*$ before the fresh weights are chosen. Its coefficient polynomial
is nonzero of degree $\ell$. The event that all $D_v$ vanish implies
that $D_{v_*}$ vanishes, so each round fails with probability at most
$\ell/|\F|$. No union bound over vertices is necessary to find one
valid deletion. A successful execution makes exactly $n-\ell$
deletions and leaves $\ell$ vertices. Using at most the original
size in each round gives $\softO((n-\ell)n^{\tau+1})$ time and
quadratic space.

Fresh weighting is relevant here. With three two-cycles sharing a
center and all weights one, the total degree-two coefficient is
one, but deleting any leaf leaves two cancelling cycles and deleting
the center leaves none. All deletion coefficients are zero although
a target cycle exists. This does not affect exact row maintenance,
which uses the additive identity~\eqref{eq:row-partition} rather
than treating nonzero specialized coefficients as a monotone property.

\subsubsection*{A cubic finishing step on the remaining vertices}

Let the remaining graph have $k=\ell$ vertices. Every even cycle
is Hamiltonian, so for every diagonal matrix $D$,
\begin{equation}\label{eq:hamilton}
 \ph(D+W)=\sum_{C\text{ Hamiltonian}}\prod_{e\in C}w_e.
\end{equation}
This follows because every supported odd permutation includes an
even cycle, which already uses all $k$ vertices. In particular,
the value, the first derivatives at supported off-diagonal entries, and the arc
contributions are independent of $D$, and all diagonal derivatives
vanish.

Choose fresh random arc weights and an independent uniform
$t\in\F$. Initialize the scalar state at $A=tI+W$ by classical
elimination and scalar evaluation. If $A$ is singular or $F=0$,
return failure. Process rows using~\eqref{eq:single-row}:
since the diagonal contribution is zero and $F\ne0$, some original
arc contribution $A_{q,j}g_j$ is nonzero. Retain the least-index such
arc and the diagonal entry $t$, and update by~\eqref{eq:dynamic}
over $\F$. Return failure if no such arc exists, if $d=0$, or if
the updated $F$ is zero. All $k$ rows
cost $O(k^3)$ field operations. Verify that the surviving original
arcs form a Hamiltonian cycle before returning it.

\begin{lemma}[Finishing-step success]\label{lem:finish}
Under the Hamiltonian promise, one finishing attempt succeeds with
probability at least
\begin{equation}\label{eq:finish-error}
 1-\frac{k(k+1)+k}{|\F|}.
\end{equation}
\end{lemma}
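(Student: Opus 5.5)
The plan is to identify the bad events and bound each by Schwartz--Zippel in the joint variables (arc weights $w$ and the diagonal shift $t$), taking a union bound over at most $k(k+1)+k$ units of degree. The failure modes are: $A=tI+W$ is singular at initialization; $F=0$ initially; at some row no original arc contribution is nonzero; some update has $d=0$; or the updated $F$ vanishes. I will argue that, as long as the current matrix stays invertible, $F$ stays nonzero and equals the Hamiltonian polynomial of the current subgraph. Then only invertibility events and one initial nonvanishing event need to be controlled.

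First, the value. By \eqref{eq:hamilton}, $F=\ph(tI+W)$ equals the Hamiltonian polynomial $\sum_C\prod_{e\in C}w_e$ of the current graph. This is independent of $t$, nonzero as a polynomial in $w$ under the promise, and of degree $k$. So $\Pr[F=0]\le k/|\F|$; this is the extra $+k$ term.

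Second, the row processing. Suppose that at row $q$ the current graph (rows already processed keep one arc) has Hamiltonian polynomial $F\ne0$ at the chosen weights, and the current matrix is invertible. The diagonal derivative vanishes, so \eqref{eq:row-partition} gives $F=\sum_{j\ne q}A_{q,j}g_j$ over $\F$. Hence some arc contribution $A_{q,j}g_j$ is nonzero. Moreover, $A_{q,j}g_j$ is exactly the value of the Hamiltonian polynomial of the graph obtained by keeping only arc $(q,j)$ out of $q$. Thus the updated $F$ equals this nonzero contribution, and the condition that the updated $F$ is nonzero never fails once invertibility holds. Since $g$ is computed by \eqref{eq:single-row}, which requires invertibility, the only remaining risks are that the initial matrix is singular or that some $d=0$. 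By the matrix determinant lemma, $d=0$ is equivalent to the updated matrix being singular.

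Third, the determinants. There are at most $k+1$ matrices in play: the initial one and the $k$ successors. Each is $tI+W'$, where $W'$ is a restricted adjacency matrix, so its determinant is a polynomial of total degree $k$ in $(t,w)$ whose $t^k$ coefficient is $1$. Hence it is nonzero as a polynomial.

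The main obstacle is that the retained arc at each row is chosen adaptively from the random values, so the union bound cannot simply range over fixed polynomials. I would first try bounding over all possible matrices: each row of the $k$-th successor is either the original row or the diagonal $t$ plus one arc. A cleaner route is to observe that $\det(tI+W')$ over all subgraphs obtained this way is dominated by $t^k$. The bound $k(k+1)$ suggests instead charging each of the $k+1$ determinant events degree $k$ for a fixed sequence, conditioned on the previous choices. The choices depend on $w$ and $t$, though, so the argument should be made by deferred decisions. I would handle this by noting that the event ``the process follows a given sequence and the next determinant vanishes'' is contained in the vanishing of a fixed polynomial. I would then sum only along the realized path, using the principle of deferred decisions with $t$ treated as the last variable to expose: for fixed $w$, each determinant is a monic degree-$k$ polynomial in $t$. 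If that conditioning becomes too delicate, I would fall back on the product of all the relevant determinants together with $F$, a single polynomial of degree $k(k+1)+k$. That fallback reproduces \eqref{eq:finish-error} exactly, provided the set of matrices can be fixed in advance.
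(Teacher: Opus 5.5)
You have the right ingredients: the $k/|\F|$ term from the Hamiltonian polynomial, the reduction of every failure mode to invertibility, and monic degree-$k$ determinants in $t$. The gap is the step that actually produces the $k(k+1)$ term. It is left open, and the sentence ``The choices depend on $w$ and $t$'' is where the argument goes wrong. None of your three options closes it as stated.
\begin{enumerate}
\item A union bound over all possible matrices ranges over exponentially many restriction sequences.
\item The single product polynomial in $(t,w)$ cannot be formed in advance, because which $k+1$ matrices occur depends on $w$.
\item ``Deferred decisions with $t$ exposed last'' gives nothing by itself. There is no fresh randomness per row, only the one shift $t$ shared by all $k+1$ invertibility events. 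If the path could vary with $t$, the determinant you test would itself vary with $t$, and the root count would not apply.
\end{enumerate}

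The missing observation, which is the paper's argument, is that on the good event the path is a function of $w$ alone. Every restriction of the current graph to the same $k$ vertices still satisfies the Hamiltonian promise, so \eqref{eq:hamilton} applies to every restricted graph. Hence each arc contribution $A_{q,j}g_j$, and each value of $F$, equals a $t$-free Hamiltonian-polynomial value.

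So fix $w$ with $F\neq0$ and define an ideal sequence of row restrictions from these values and the least-index tie rule. This fixes at most $k+1$ matrices $tI+W_s$ before $t$ is drawn. The product of their determinants is monic of degree $k(k+1)$ in $t$ for every $w$, so at most $k(k+1)$ shifts are bad.

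For any other $t$, induction over the rows shows three things: every state is invertible, every $d\neq0$, and the computed coefficients equal the ideal ones. The algorithm therefore follows the ideal sequence. It ends with outdegree one and a nonzero Hamiltonian polynomial, that is, with a Hamiltonian cycle.

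Conditioning on $w$ and counting roots in $t$, rather than a joint Schwartz--Zippel bound, yields $k/|\F|+k(k+1)/|\F|$.
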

\begin{proof}
The Hamiltonian-cycle polynomial is nonzero of degree $k$, so its
initial specialization vanishes with probability at most $k/|\F|$.
Fix weights for which it does not vanish. Define an ideal sequence
of row restrictions using the polynomial arc contributions and the
specified tie rule. By~\eqref{eq:hamilton}, this sequence and its
nonzero values do not depend on $t$. Deletions preserve the
Hamiltonian promise, so the same statement holds at all later states.

There are at most $k+1$ matrices $tI+W_s$ along this sequence,
each with a monic determinant polynomial of degree $k$ in $t$.
Their combined root sets have size at most $k(k+1)$. For a shift
outside this set, every state is invertible, all inverse updates
are defined, and the computed row coefficients equal the ideal
ones. The algorithm then follows the ideal sequence and finishes
with a Hamiltonian cycle. Adding the initial cancellation probability
gives~\eqref{eq:finish-error}.
\end{proof}

The determinant bound covers every intermediate state, not just
the initial matrix. Using a single bounded attempt also ensures
termination when an erroneous initial target breaks the promise.

\begin{proposition}[Quadratic-space extraction bound]\label{prop:long}
There is an algorithm using $O(n^2)$ field elements that, with
probability $1-O(n^{-3})$, returns a shortest even cycle in time
\begin{equation}\label{eq:long-bound}
 \softO\!\left((n-\ell+1)n^{\tau+1}+\ell^3\right).
\end{equation}
Every execution has time $\softO(n^{\tau+2})$ or less. Every cycle
returned is explicitly verified.
\end{proposition}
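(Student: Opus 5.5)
The algorithm composes three stages: length detection, the vertex-deletion rounds of Appendix~\ref{app:long}, and the cubic finishing step. Quadratic space rules out the truncated-series length detection of Lemma~\ref{lem:truncated}. We therefore use the fixed-point interpolation of Proposition~\ref{prop:length} over a field with $|\F|\ge n^6$. It computes a candidate $r$ in $\softO(n^{\tau+1})$ bit operations and quadratic space, is never an underestimate, and equals $\ell$ except with probability $O(n^{-3})$. If no positive even coefficient is found, we report absence; this is always correct on even-cycle-free inputs because $Q=0$ there.

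Given the detected target $r$, we run deletion rounds. Each round draws fresh weights and computes all $D_v=[z^r]\ph(I+zW_{H-v})$ by one gradient and interpolation pass using~\eqref{eq:vertex-poly}. We then delete the least-index vertex with $D_v\ne0$, and stop with failure if there is none. We perform exactly $n-r$ rounds, stopping early on failure, so there is a hard cap regardless of correctness. Each round costs $O(n)$ evaluations of $O(n^\tau)$ operations by Theorem~\ref{thm:main-algebra}, with points processed successively in quadratic space. Hence this stage costs $\softO((n-r)n^{\tau+1})$, which is at most $\softO(n^{\tau+2})$ on every execution. When $r=\ell$, each selected deletion is valid: $D_v\ne0$ specializes a polynomial that is nonzero only if $H-v$ still contains an $\ell$-cycle, by~\eqref{eq:subgraph-target}. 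The argument given in Appendix~\ref{app:long} bounds each round's failure by $\ell/|\F|$, through a fixed deletable vertex $v_*$ chosen before the weights are drawn. A union over at most $n$ rounds gives $O(n\ell/|\F|)=O(n^{-4})$.

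On the remaining $r$ vertices we apply the finishing step once. Its cost is $O(r^3)\le O(n^3)$ field operations, within $\softO(n^{\tau+2})$, and it needs $O(r^2)$ space. When $r=\ell$ and all rounds succeeded, the residual graph has exactly $\ell$ vertices and contains an $\ell$-cycle. Every even cycle in it is then Hamiltonian, so the Hamiltonian promise of Lemma~\ref{lem:finish} holds, and the attempt fails with probability at most $(\ell(\ell+1)+\ell)/|\F|=O(n^{-4})$. The final verification guarantees that any returned cycle is a genuine even cycle of length $r$. Since $r\ge\ell$ always, a verified cycle of length $r=\ell$ is shortest. Summing the error terms from length detection, the rounds and the finish gives $1-O(n^{-3})$. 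On the success event the time is $\softO((n-\ell+1)n^{\tau+1}+\ell^3)$, where the term $n^{\tau+1}$ from length detection accounts for the $+1$.

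The main obstacle is making the probability accounting legitimate across adaptively chosen subgraphs. Each round's fresh weights must be independent of the current graph $H$, which is determined by earlier randomness. I would condition on the history, note that $v_*$ and its nonzero coefficient polynomial are fixed before the new draw, and then apply Schwartz--Zippel conditionally, as in~\eqref{eq:group-error}. A second point needs care: erroneous executions with $r>\ell$, or with a cancelled deletion, must still terminate within the time cap. This follows from the fixed round count, the single bounded finishing attempt, and linear-time verification.
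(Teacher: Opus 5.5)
Your proposal is correct and follows the paper's proof: fixed-point interpolation for the length, at most $n-r$ deletion rounds with fresh weights, a single finishing attempt followed by verification, and an error total of $(\ell+(n-\ell)\ell+\ell(\ell+1)+\ell)/|\F|$. Your explicit conditioning on the history for the adaptive rounds, and your observation that the length stage supplies the $+1$ in $(n-\ell+1)n^{\tau+1}$, spell out details that the paper leaves to ``the preceding analyses.''
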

\begin{proof}
Compute the length by fixed-point interpolation, perform the
$n-\ell$ deletion rounds, and apply the finishing step. The
combined error probability, with fresh randomness at each stage,
is at most
\[
 \frac{\ell+(n-\ell)\ell+\ell(\ell+1)+\ell}{|\F|}
 =\frac{n\ell+3\ell}{|\F|}.
\]
The first term includes the initial length error. The time and space
follow from the preceding analyses. For a candidate $r$, cap the
deletion stage at $n-r$ rounds, allow one finishing attempt, and
validate its output. These bounds apply regardless of whether
the candidate was correct.
\end{proof}

\subsection{Selecting a method and bounding every execution}

\begin{theorem}[Combined parameter bound]\label{thm:parameter}
If an input graph has shortest even-cycle length $\ell$, an algorithm
returns a shortest cycle within the bound~\eqref{eq:intro-parameter}
with probability $1-O(n^{-3})$. Every execution takes at most
$\softO(n^4)$ bit operations and returns an actual simple even cycle,
no detection, or failure. The algorithm can use $O(n^3)$ field
elements; quadratic space is separately available for
Proposition~\ref{prop:long}.
\end{theorem}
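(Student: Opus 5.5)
The plan is to combine the three extraction procedures (exact row maintenance from Theorem~\ref{thm:main-witness}, grouped arc deletions from Appendix~\ref{app:short}, and vertex deletion with the cubic finishing step from Proposition~\ref{prop:long}) by running them under a common detected length, without knowing $\ell$ in advance.

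\textbf{Step 1: detect a candidate length.} Use the precision-doubling procedure of Lemma~\ref{lem:truncated} to detect a candidate $r$. On the event $P_\ell(\beta)\ne0$ this costs $\softO(n^\tau\ell)$ and gives $r=\ell$. If no positive even coefficient is found, report no detection. This report is always correct on even-cycle-free inputs and costs at most $\softO(n^{\tau+1})$.

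\textbf{Step 2: fix the three budgets.} With $r$ in hand, compute the three explicit budgets
\[
c_1=n^3r,\qquad c_2=n^\tau r^2,\qquad c_3=(n-r+1)n^{\tau+1}+r^3,
\]
up to fixed polylogarithmic factors and constants taken from the analyses. These are, respectively, row maintenance at precision $r$, the query procedure capped at $B(r)$ queries, and at most $n-r$ deletion rounds plus one finishing attempt. Run only the method with the smallest budget, under a hard step counter enforcing that budget. A violated counter, exhausted query budget, failed verification or failed finishing attempt returns \emph{failure}.

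\textbf{Step 3: guarantee the output is a genuine even cycle.} Every output passes a verification step. Row maintenance outputs a cycle of the final functional graph, which is an input cycle; we also check that its length is even and equal to $r$, or simply that it is even. The other two methods verify explicitly, as described in their subsections. Hence every returned object is an actual simple even cycle. Shortest-ness is claimed only on the success event.

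\textbf{Step 4: bound the cost of every execution.} Each budget is at most $\softO(n^4)$ since $r\le n$ and $\tau\le3$: $n^3r\le n^4$, $n^\tau r^2\le n^{\tau+2}\le n^5$, and $c_3\le n^{\tau+2}$. But we take the minimum, and $c_1\le n^4$ already bounds that minimum. Detection costs at most $\softO(n^{\tau+1})$. So every execution costs $\softO(n^4)$ bit operations, field arithmetic being $\softO(\log n)$ per operation.

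\textbf{Step 5: correctness on the good event.} Intersect the events $P_\ell(\beta)\ne0$ and the chosen method's success event, with fresh randomness as specified. By Proposition~\ref{prop:length}, \eqref{eq:group-error}, Proposition~\ref{prop:long} and the optimality part of Lemma~\ref{lem:preservation}, a union bound gives total failure probability $O(n^{-3})$ with $|\F|\ge n^6$. Row maintenance fails only through the length event. On this event $r=\ell$, so the chosen cost is $\min\{c_1,c_2,c_3\}$ evaluated at $\ell$, which is \eqref{eq:intro-parameter}, plus detection cost $\softO(n^\tau\ell)$. That detection cost is dominated by $n^3\ell$, by $n^\tau\ell^2$, and by $n^{\tau+1}$, so it is absorbed.

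\textbf{Space.} Each method stays within $O(n^3)$ field elements. Choosing the third method alone gives the quadratic-space statement.

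The main subtlety is ensuring that the choice of method and the budgets depend only on the detected $r$, so that erroneous runs with $r\ne\ell$ still terminate within $\softO(n^4)$. We must also ensure the budgets' hidden constants are fixed in advance, so that the hard cutoffs never abort a correct run on the good event. I would handle this by setting each counter to a constant multiple of the proven operation bound of that method at parameter $r$.
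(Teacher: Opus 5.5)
Your proposal is correct and follows the paper's proof essentially step for step: precision doubling with the weights saved yields $r$, the method minimizing $f_i(r)$ is run under caps with verification (your generic step counter plays the role of the paper's query and round caps), every run is bounded by $f_1(r)\le n^4$ up to logarithms, and on the event $r=\ell$ the $\softO(n^\tau\ell)$ detection cost is absorbed into each $f_i(\ell)$. There are two cosmetic slips. First, $c_2$ alone can reach $n^{\tau+2}$, which you handle correctly by falling back on the minimum. Second, the quadratic-space statement comes from running Proposition~\ref{prop:long} with its own fixed-point length computation, not from selecting the third method after precision doubling, which already stores $O(n^2r)$ elements.
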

\begin{proof}
First use precision doubling from Section~\ref{sec:witness} to
obtain a candidate $r$, keeping that assignment of arc weights.
If no coefficient is detected, stop. Compare
\[
 f_1(r)=n^3r,\qquad f_2(r)=n^\tau r^2,\qquad
 f_3(r)=(n-r+1)n^{\tau+1}+r^3
\]
and run the corresponding method with the smallest value, breaking
ties deterministically. Polynomial logarithmic overheads in the
three implementations can all be bounded by one fixed power of
$\log n$, so this choice suffices for the soft bounds.

For the first method, reuse the saved weights and maintain its
known nonzero degree-$r$ coefficient exactly. For the second, use
fresh randomness per query and enforce $B(r)$. For the third, begin
with deletion rounds at the already detected target $r$, using
fresh randomness per round and in the finishing step,
with the caps above. Validate every output by successor traversal.

With probability at least $1-\ell/|\F|$, the first stage finds
$r=\ell$ and costs $\softO(n^\tau\ell)$. This preparatory cost
is absorbed by each $f_i(\ell)$: for $f_3$, its first summand
already dominates $n^{\tau+1}\ge n^\tau\ell$.
Conditional on this first event and on any prior history, the chosen
method has the stated failure bound, using either exact maintenance
or fresh random queries. This proves the joint high-probability
success and time guarantee.

Even on an erroneous first specialization, the detected $r$ is
an even integer in $[2,n]$, and the selected $f_i(r)$ is at most
$f_1(r)\le n^4$. Query caps, round caps, and linear-time validation
enforce the corresponding cost on every execution. Precision
doubling itself is always bounded by $\softO(n^{\tau+1})$, which
is at most $\softO(n^4)$. Direct truncated storage uses at most
$O(n^3)$ elements; the combined algorithm makes no unconditional
quadratic-space claim.
\end{proof}

\begingroup
\small
\bibliographystyle{alpha}
\bibliography{references}
\endgroup
\end{document}